\title{Quantum Advantage with Shallow Circuits Under Arbitrary Corruption} 
\titlerunning{Quantum Advantage with Shallow Circuits Under Arbitrary Corruption} 
\author{Atsuya Hasegawa}{The University of Tokyo, Japan}
{atsuyahasegawa@is.s.u-tokyo.ac.jp}{}{}
\author{Fran{\c{c}}ois Le Gall}{Nagoya University, Japan}{legall@math.nagoya-u.ac.jp}{}{}
\authorrunning{A.\,Hasegawa and F.\,Le Gall} 
\keywords{Quantum computing, circuit complexity, constant-depth circuits} 
\begin{document}

\maketitle

\begin{abstract}
Recent works by Bravyi, Gosset and K{\"o}nig (Science 2018), Bene Watts et al. (STOC 2019), Coudron, Stark and Vidick (QIP 2019) and Le Gall (CCC 2019) have shown unconditional separations between the computational powers of shallow (i.e., small-depth) quantum and classical circuits: quantum circuits can solve in constant depth computational problems that require logarithmic depth to solve with classical circuits. Using quantum error correction, Bravyi, Gosset, K{\"o}nig and Tomamichel (Nature Physics 2020) further proved that a similar separation still persists even if quantum circuits are subject to local stochastic noise.

In this paper, we consider the case where \emph{any} constant fraction of the qubits (for instance, huge blocks of qubits) may be arbitrarily corrupted at the end of the computation.
We make a first step forward towards establishing a quantum advantage even in this extremely challenging setting: we show that there exists a computational problem that can be solved in constant depth by a quantum circuit but such that even solving \emph{any} large subproblem of this problem requires logarithmic depth with bounded fan-in classical circuits. This gives another compelling evidence of the computational power of quantum shallow circuits. 

In order to show our result, we consider the Graph State Sampling problem (which was also used in prior works) on expander graphs. We exploit the ``robustness'' of expander graphs against vertex corruption to show that a subproblem hard for small-depth classical circuits can still be extracted from the output of the corrupted quantum circuit. 
\end{abstract}

\section{Introduction}
\noindent{\textbf{Background.}}
Quantum computing was introduced in the early 1980s as a quantum mechanical model of the Turing machine that has a potential to simulate things that a classical computer could not \cite{benioff1980computer,feynman1982simulating}. In 1994, Peter Shor developed a polynomial-time quantum algorithm for factoring integers \cite{Shor_1997}, which gave an exponential speedup over the most efficient known classical algorithms for this task.

While initially realizing a physical quantum computer was thought to be extremely challenging, nowadays various types of high-fidelity processors capable of quantum algorithms have been developed \cite{benhelm2008towards,wallraff2004strong,nakamura1997spectroscopy,o2007optical,you2011atomic}. These devices with noie and relatively small scale are called NISQ (Noisy Intermediate-Scale Quantum) devices \cite{preskill2018quantum}. For such devices, ``quantum supremacy'' \cite{preskill2013quantum} has been recently reported \cite{arute2019quantum,Zhongeabe8770}.

While Shor's algorithm and such quantum supremacy results strongly suggest that  quantum computation is more powerful than classical computation, they are not mathematical proofs. While the superiority of quantum computation has been formally shown in constrained models such as query complexity \cite{ams18} and communication complexity \cite{de2002quantum,lo2000classical} and considering complexity classes relative to an oracle \cite{bernstein1997quantum,raz2019oracle}, almost no definite answer is known in standard computational models such as Turing machines or general circuits. Since the complexity class BQP (the class of the problems that can be solved efficiently by a quantum computer) satisfies the inclusions P $\subseteq$ BPP $\subseteq$ BQP $\subseteq$ PSPACE, an unconditional separation between BPP and BQP would imply a separation between P and PSPACE, which would be a significant breakthrough. Therefore,  unconditional separations between the computational powers of quantum computers and classical computers in a general setting are expected to be very hard to obtain.

However, with several assumptions from computational complexity such as non-collapse of the polynomial hierarchy or some conjectures on the hardness of the permanent, the superiority of quantum computation has been shown in the circuit model: even approximate or noisy probabilistic distributions of small depth quantum circuits are hard to simulate for classical computers \cite{aaronson2011computational,10.5555/2685179.2685186,10.5555/3135595.3135617,bouland_et_al:LIPIcs:2018:10108,bremner2011classical,bremner2016average,bremner2017achieving,terhal2004adptive}. A recent breakthrough by Bravyi, Gosset and K{\"o}nig \cite{Bravyi_2018} showed an \emph{unconditional} separation between the computational powers of small-depth quantum and classical circuits: they constructed a computational problem that can be solved by quantum circuits of constant depth composed of one- and two-qubit gates acting locally on a grid and showed that any probabilistic classical circuit with bounded fan-in gates solving this problem on all inputs must have depth $\Omega(\log n)$, where $n$ denotes the input size. The computational problem they use is a relation problem (i.e., for any input there are several possible outputs). Besides its theoretical importance, this separation is also important since shallow quantum circuits are likely to be easy to implement on physical devices experimentally due to their robustness to noise and decoherence.

There are several results related to this separation. Coudron, Stark and Vidick \cite{coudron2021trading} and Le Gall \cite{le2019average} showed a similar separation in the average case setting, instead of the worst case setting considered in the original version of \cite{Bravyi_2018}: there exists a relation problem such that constant-depth quantum circuits can solve the relation on all inputs, but any $O(\log n)$ depth randomized bounded fan-in classical circuits cannot solve it on {\sl most} inputs with high probability. Bene Watts et al.~\cite{watts2019exponential} showed that a similar separation holds against classical circuits using unbounded fan-in gates and, considering interactive tasks, Grier and Schaeffer~\cite{grier2020interactive} showed even stronger classical lower bounds.

Bravyi et al.~\cite{bravyi2020quantum} additionally proved, using quantum error-correction, that a similar separation holds even if quantum circuits are corrupted by local stochastic noise (see Definition~\ref{def:noise} below). The computational problem used in \cite{bravyi2020quantum} is a generalized version, defined on a 3D grid, of the magic square game \cite{mermin1990extreme,peres1990incompatible}, which is a nonlocal game with two cooperating players Alice and Bob who cannot communicate. The noise model is as follows.
\begin{definition}[Definition 9 in \cite{bravyi2020quantum}]\label{def:noise}
  Consider a random $n$-qubit Pauli error $E \in \{I,X,Y,Z\}^{\otimes n}$ and let Supp($E$) $\subseteq$ $[n]$ denote its support, i.e., the subset of qubits acted on by either $X,Y,$ or $Z$. For any constant p $\in$ [0,1], E is called p-local stochastic noise if 
  \[
    \mathrm{Pr}[F \subseteq Supp(E)] \leq p^{|F|} \hspace{10pt} \ \mathrm{for \ all} \ F \subseteq [n].
  \]
\end{definition}
In this model, an error is given as applying a gate which is $X,Y$ or $Z$. This definition assumes that when picking an arbitrary subset of qubits, the probability of all qubits are corrupted is an exponentially small function of the size of the subset. This property, which implies that a subset of size $\Omega(\log n)$ contains with probability $1-1/\textrm{poly}(n)$ at least one qubit that is not corrupted, is crucial in \cite{bravyi2020quantum} to use quantum error correction. Note that, considering interactive tasks, Grier, Ju and Schaeffer \cite{grier2021interactive} showed even stronger classical lower bounds in the same noise setting.\vspace{2mm}

\noindent{\textbf{The Graph State Sampling problem.}}
Before presenting our results, let us describe in more details the computational problem introduced in \cite{Bravyi_2018}, which is called the 2D Hidden Linear Function problem and corresponds to an extension of the Bernstein-Vazirani problem~\cite{bernstein1997quantum}.

We actually describe a slightly more general computational problem that we name the ``Graph State Sampling problem'' and denote $\rho(G)$. Here $G=(V,E)$ is a graph specifying the problem. For any graph~$G$, the problem $\rho(G)$ is a relation $\rho(G)\subseteq \{0,1\}^{|V|+|E|} \times \{0,1\}^{|V|}$. Given an input $x\in \{0,1\}^{|V|+|E|}$ for this relation problem, which we interpret as a pair $x=(y,H)$ with $y\in\{0,1\}^{|V|}$ and $H$ being a subgraph of~$G$, we ask to output any bit string $z\in\{0,1\}^{|V|}$ that may appear with nonzero probability when measuring the graph state corresponding to the subgraph $H$ in a basis determined by the bit string $y$. We refer to Section~\ref{def:GSS} for details of the definition of the problem.

The 2D Hidden Linear Function problem considered in \cite{Bravyi_2018} is essentially the problem $\rho(G)$ where $
G$ is the family of 2D grid graphs. Since the graph states of subgraphs of a 2D grid can be constructed by constant-depth quantum circuits whose gates act locally on the grid graphs, the problem $\rho(G)$ can be solved by a constant-depth quantum circuit. At the same time, Bravyi et al.~\cite{Bravyi_2018} prove that no small-depth classical circuits can solve this problem using an argument based on the existence of quantum nonlocality in a triangle (first shown by Barrett et al.~\cite{barrett2007modeling}).
The main result in \cite{Bravyi_2018} can essentially be restated as follows.\footnote{In this paper, for any graph $G$ we use the notation $|G|$ to denote the size of the vertex set of $G$.}

\begin{theorem}[\cite{Bravyi_2018}]\label{th:B18}
  There exist a constant $\alpha > 0$ such that the following holds for all sufficiently large 2D grid graphs $G$:
  \begin{enumerate}[(i)]
      \item $\rho(G)$ can be solved on all inputs with certainty by a constant-depth quantum circuit on $\Theta(|G|)$ qubits composed of one- and two-qubit gates;
      \item no bounded-fanin classical probabilistic circuit whose depth is less than $\alpha \log(|G|)$ can solve with high probability $\rho(G)$ on all inputs.
  \end{enumerate}
\end{theorem}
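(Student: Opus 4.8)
I would prove the two parts by separate arguments. For the quantum upper bound (i), the plan is to realize the whole computation — preparing the subgraph state and measuring it — in a constant number of layers. First I would assign one qubit to each vertex of $G$ and prepare $|+\rangle^{\otimes|V|}$ with a single layer of Hadamard gates. The graph state of a subgraph $H$ is then obtained by applying a controlled-$Z$ gate across each edge of $H$; since a 2D grid has maximum degree $4$, Vizing's theorem lets me partition its edges into a constant number of matchings, and the controlled-$Z$ gates of one matching act on disjoint qubit pairs and hence run in a single layer. Finally, measuring ``in the basis determined by $y$'' is just one single-qubit gate per vertex (selected by the corresponding bit of $y$) followed by a Hadamard and a computational-basis measurement, i.e.\ one more $O(1)$-depth layer. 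Any outcome $z$ produced this way is a valid sample, so the circuit solves $\rho(G)$ with certainty using $\Theta(|G|)$ qubits and only one- and two-qubit gates.

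For the classical lower bound (ii), the plan is to argue by contradiction and reduce to the impossibility of reproducing GHZ/Mermin correlations classically — the quantum-nonlocality-in-a-triangle phenomenon of Barrett et al. Suppose a bounded-fan-in probabilistic circuit $C$ of depth $d$ solved $\rho(G)$ on all inputs with high probability. The first ingredient I would use is the standard light-cone bound: in a fan-in-$K$ circuit of depth $d$ each output bit is a function of at most $K^{d}$ input bits, so $C$ is ``local'' in the sense that each output depends only on a small set of inputs. The second ingredient is a gadget: inside the grid I would exhibit a choice of subgraph $H$ and three distinguished vertices $a,b,c$ for which the stabilizer structure of the graph state forces the outputs $z_a,z_b,z_c$ to satisfy the GHZ winning condition as functions of three ``measurement-setting'' bits $s_a,s_b,s_c$ taken from $y$ near the respective vertices.

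The crux is then to place this gadget so that the induced strategy is genuinely local: I would choose the placement (both the vertices and which input bits play the role of the settings) so that each backward light cone $L_a,L_b,L_c$ excludes the other two players' setting bits, i.e.\ $s_{b},s_{c}\notin L_a$ and symmetrically. Because each light cone has size at most $K^{d}$ while the grid offers $\Theta(|G|)$ possible placements, such a configuration exists whenever $K^{d}$ is small compared with $|G|$, which is exactly what $d<\alpha\log|G|$ guarantees for a sufficiently small constant $\alpha$. Conditioning on the internal randomness $r$ of $C$, the gadget then yields three functions $f_a(s_a,r),f_b(s_b,r),f_c(s_c,r)$ — each ignorant of the other players' settings — that reproduce the GHZ correlations with high probability. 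This is precisely a shared-randomness, no-communication classical strategy for the Mermin game, which cannot win on all question combinations; hence $C$ must err on some input, contradicting the assumption and yielding the $\Omega(\log|G|)$ lower bound.

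The step I expect to be the main obstacle is the gadget construction together with its placement: one must design $H$ and the three vertices so that the graph-state measurement algebraically enforces the Mermin correlations, while simultaneously controlling the combinatorics of the light cones so that the three outputs depend on disjoint setting information. Making these algebraic and geometric constraints compatible — and tracking exactly how small $\alpha$ must be for the separation to survive the $K^{d}$ blow-up of the light cones — is the heart of the proof.
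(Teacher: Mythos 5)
Your proposal is correct and takes essentially the same route as the paper: part (i) is proved exactly as in the paper's Lemma~\ref{constantquantum} (one qubit per vertex, Hadamards, input-controlled $CZ$ and phase gates scheduled in constant depth via Vizing's edge coloring, then a basis-change layer and measurement), and part (ii) follows the paper's Section~\ref{sec:proof} machinery inherited from Bravyi--Gosset--K{\"o}nig --- bounded lightcones of size $K^d$, a counting/probabilistic placement of three ``players'' so that their setting bits lie outside each other's lightcones, and a contradiction with the Barrett et al.\ triangle (GHZ-type) nonlocality of Lemma~\ref{intersect}. The gadget you flag as the main obstacle is precisely the even-length cycle graph state through three vertices at pairwise even distances (Claim~\ref{nonlocalityintriangle} and Claim~\ref{cycle} in the paper), so your plan identifies the right missing ingredient rather than a wrong turn.
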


\noindent{\textbf{Description of our results.}}
In this paper we show the following result.

\begin{theorem}\label{main}
      There exist constants $\alpha > 0$ and $\epsilon > 0$, and a family of graphs $(G_i)_{i \in \mathbb{N}}$ with $\displaystyle \lim_{i \to \infty} |G_i| = \infty$ such that the following holds for all sufficiently large $i$:
      \begin{enumerate}[(i)]
          \item $\rho(G_i)$ can be solved on all inputs with certainty by a constant-depth quantum circuit on $\Theta(|G_i|)$ qubits composed of one- and two-qubit gates;
          \item for any induced subgraph $S_i$ of $G_i$ such that $|S_i| \geq (1-\epsilon) |G_i|$, no bounded-fanin classical probabilistic circuit whose depth is less than $\alpha \log(|S_i|)$ can solve with high probability $\rho(S_i)$ on all inputs.
      \end{enumerate}
\end{theorem}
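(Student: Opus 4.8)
The plan is to instantiate the family $(G_i)_{i\in\mathbb{N}}$ with bounded-degree expander graphs, chosen (or mildly modified, since Ramanujan graphs may be triangle-free) so as to contain $\Omega(|G_i|)$ \emph{vertex-disjoint} copies of the nonlocality gadget underlying the triangle argument of \cite{Bravyi_2018,barrett2007modeling}. Part~(i) is then essentially immediate: since $G_i$ has bounded degree, its edge set decomposes into a constant number of matchings, so the graph state of any input subgraph~$H$ can be prepared by first putting every qubit in the state $\ket{+}$ and then applying, one matching per layer, the controlled-$Z$ gates (each controlled by the corresponding input edge bit). Measuring each qubit in the basis determined by~$y$ yields a valid output of $\rho(G_i)$, and the whole circuit has constant depth and uses $\Theta(|G_i|)$ one- and two-qubit gates. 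The real work is concentrated in part~(ii).

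For the classical lower bound I would first isolate, as a stand-alone lemma, the hardness mechanism behind Theorem~\ref{th:B18}. A bounded-fanin depth-$d$ circuit has each output bit depending on at most $2^d$ input bits, so if $d<\alpha\log|H|$ for a small enough constant~$\alpha$, then every such ``light cone'' has size $|H|^{o(1)}$, i.e.\ a vanishing fraction of the qubits. Following the triangle argument of \cite{Bravyi_2018}, one designs inputs to $\rho(H)$ under which the output bits carried by each gadget must reproduce a GHZ-type correlation that no assignment depending only locally on the inputs can produce; since a shallow circuit is effectively local, an averaging argument over the $\Omega(|H|)$ gadgets exhibits at least one gadget on which the circuit is forced to fail. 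The statement I aim for is thus: if $H$ carries $\Omega(|H|)$ well-spread nonlocality gadgets and remains \emph{globally well-connected}, then no bounded-fanin circuit of depth $<\alpha\log|H|$ solves $\rho(H)$ with high probability.

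It then remains to verify that \emph{every} induced subgraph $S_i$ with $|S_i|\ge(1-\epsilon)|G_i|$ satisfies the hypotheses of this lemma, and this is exactly where the robustness of expanders enters. Because the gadgets are vertex-disjoint, deleting the at most $\epsilon|G_i|$ corrupted vertices destroys at most $\epsilon|G_i|$ of them, so $S_i$ still carries $\Omega(|S_i|)$ intact gadgets as soon as $\epsilon$ is smaller than the gadget density. More importantly, an expander minus any $\epsilon$-fraction of its vertices still expands on its bulk, which is what guarantees that the global connectivity exploited by the lower bound survives arbitrary corruption and that the problem on $S_i$ cannot be decomposed into independent constant-size pieces (the decomposition that would otherwise render $\rho$ trivially easy for shallow circuits). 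I would make this quantitative with the expander mixing lemma or a vertex-expansion bound, so as to control the \emph{worst-case} deletion rather than an average one.

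I expect the main obstacle to be precisely this last point: making the nonlocality argument of \cite{Bravyi_2018} robust against \emph{adversarial} deletion of an $\epsilon$-fraction of the vertices. Unlike the 2D grid of \cite{Bravyi_2018}, where removing a single well-chosen block can disconnect the graph and collapse the hardness, here one must argue that for every choice of the corrupted set the surviving induced subgraph simultaneously admits the triangle reduction and still presents a globally hard instance. The delicate part is coordinating the averaging over surviving gadgets with the light-cone counting while assuming only the expansion of $G_i$, with no geometric regularity to fall back on; getting the constants $\alpha$ and $\epsilon$ to be compatible (gadget density versus corruption fraction versus light-cone size) is where I anticipate the argument to require the most care.
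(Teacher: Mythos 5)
Your part (i) is fine and matches the paper (bounded degree gives a constant edge-colouring, hence constant-depth layers of controlled-$CZ$ gates), but part (ii) has a genuine gap: the key lemma you propose --- that $\Omega(|H|)$ vertex-disjoint nonlocality gadgets plus global expansion imply hardness for depth $<\alpha\log|H|$ circuits --- is false, and no averaging argument can rescue it. To have $\Omega(|G_i|)$ vertex-disjoint gadgets, each gadget must have constant size; but a bounded fan-in circuit of depth $d<\alpha\log|H|$ has lightcones of size up to $K^{d}$, which is polynomial in $|H|$ and therefore dwarfs any constant. Hence every output bit of a gadget may depend on \emph{all} of that gadget's input bits, and the correlation of Claim \ref{nonlocalityintriangle} poses no obstruction at all: for each of the constantly many input configurations of a gadget, the circuit can simply hard-wire one valid measurement outcome (sampling a constant-size graph state classically is trivial), so even a constant-depth fan-in-2 circuit solves all gadget instances. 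The triangle argument of \cite{Bravyi_2018,barrett2007modeling} only bites when the three marked vertices $p,q,r$ are separated by distances exceeding the lightcone radius --- so the ``triangle'' must be a cycle of size polynomial in $|H|$ --- and, crucially, it must be chosen \emph{adaptively, after the circuit is fixed}, so that the lightcones of $x_p,x_q,x_r$ avoid the regions around the other two vertices. A fixed family of disjoint constant-size gadgets gives the adversary no such freedom, and global connectivity does not help, since the hard instances switch off all edges outside the chosen cycle anyway.

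What your proposal is missing, and what constitutes the paper's central technical step, is a large \emph{single} structure that survives adversarial deletion and supports this adaptive choice. The paper proves (Lemma \ref{minorinexpander}, via Krivelevich's $I$-expander machinery and the theorem that $\left[1,\frac{n}{2}\right]$-expanders contain every graph with $O(n/\log n)$ vertices and edges as a minor) that after removing \emph{any} $\epsilon$-fraction of the vertices, an expander retains a connected component containing an $\Omega(|V|^{1/4})\times\Omega(|V|^{1/4})$ grid minor. The box-and-path argument of \cite{Bravyi_2018} is then run inside this grid minor: the grid supplies polynomially many candidate positions for $p,q,r$ and polynomially many disjoint connecting paths, so a union bound over the circuit's lightcones finds a long even cycle $\Gamma$ on which Lemma \ref{intersect} yields the contradiction. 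Your instinct that expander robustness is the crux is correct, but the robustness actually needed is ``survival of a large grid minor,'' not ``survival of many disjoint gadgets together with expansion''; without a replacement for that minor-extraction step, your key lemma cannot be repaired.
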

Item (ii) of Theorem \ref{main}, which is proved by considering the Graph State Sampling problem over expander graphs, gives a significantly stronger hardness guarantee than in Theorem \ref{th:B18} and thus provides us further compelling evidence of the computational power of quantum shallow circuits. 

We stress that Theorem \ref{main} does not claim a quantum advantage for noisy shallow quantum circuits: Theorem \ref{main} simply shows that there exists a problem that can be computed by shallow quantum circuit but such that a shallow classical circuit cannot solve \emph{any} (large) subproblem of it. We can nevertheless interpret Item (ii) as follows. A quantum circuit~$\mathcal{C}$ solving the relation $\rho(G_i)$ has~$|G_i|$ output qubits, which are measured at the end of the computation to give the output string $z\in\{0,1\}^{|G_i|}$ that is a solution for the relation. 
Assume that an adversary chooses up to $\epsilon|G_i|$ qubits among these $|G_i|$ qubits and corrupts them in an arbitrary way (or, essentially equivalently, corrupts the bits of the measurement outcomes corresponding to these positions). Let $S_i$ denote the set of qubits that are not corrupted by the adversary. Since $\rho(S_i)$ corresponds to a subproblem\footnote{This property can immediately be derived from the formal definition of the computational problem given in Section \ref{def:GSS}} of $\rho(G_i)$, and since $\mathcal{C}$ before the corruption solved the problem $\rho(G_i)$ on all inputs, even after the corruption the part of the output of $\mathcal{C}$ corresponding to the qubits in $S_i$ gives a correct solution to the problem $\rho(S_i)$. On the other hand, Item (ii) of Theorem \ref{main} shows that no small-depth classical circuit can solve $\rho(S_i)$. (Note that in Item (ii) we even allow the classical circuit to depend on $S_i$.)

Let us compare this model of corruption of qubits with the model of noise considered in~\cite{bravyi2020quantum}. As already mentioned, in the error model of \cite{bravyi2020quantum} (Definition 1 above), the probability that all qubits in a given set of size $\Theta(\log n )$, where $n$ denotes the total number of qubits, are corrupted by the noise is polynomially small and thus can be neglected. In comparison, Theorem \ref{main} deals with the case where any subset of qubits of size as large as $\Theta(n)$ can be corrupted, and shows that the quantum advantage is still preserved in this case. In this sense, our result gives a further compelling evidence of the computational power of quantum shallow circuits. \vspace{2mm}

\noindent{\textbf{Brief overview of our techniques and organization of the paper.}}
The family of graph $(G_i)_{i \in \mathbb{N}}$ used to prove Theorem \ref{main} is a class of expander graphs of constant degree.\footnote{For technical reasons, we actually consider a family of graphs of the form $G\times K_2$, where $G$ is an expander graph of constant degree and $K_2$ is the graph consisting of a single edge.} Item (i) of Theorem \ref{main} essentially follows from the fact the graph state of a bounded-degree graph can be created by a constant-depth quantum circuit composed of one- and two-qubit gates. The proof of Item (ii) of Theorem \ref{main} exploits the ``robustness'' of expander graphs against corruption of vertices. More precisely, we show that even after corrupting a constant fraction of vertices, an expander graph still has a large grid minor (see Lemma \ref{minorinexpander} in Section~3). Finally, exploiting the existence of a large grid minor, we can use arguments based on quantum nonlocality (very similarly to the arguments used in prior works \cite{Bravyi_2018,le2019average}) to conclude that any classical circuit solving the Graph State Sampling problem on the expander graph requires logarithmic depth. 

After giving preliminaries in Section 2, in Section 3 we present graph-theoretical results about expander graphs. In particular, we prove Lemma \ref{minorinexpander} about the existence of large grid minors in corrupted expander graphs. In Section 4, we review the result about the nonlocality of a triangle quantum graph state used in prior works. In Section 5, we define our computational problem and prove Theorem \ref{main}. 
\section{Preliminaries}
\noindent{\bf Graph theory.}
All graphs considered in this paper are undirected. We write a graph as $G = (V,E)$ where $V$ is the vertex set and $E$ is the edge set. $|G|$ means the number of vertex of graph G, i.e., $|V|$. The degree of a vertex is the number of edges that are incident to the vertex. We denote $deg(v)$ the degree of a vertex $v$. Given a graph $G = (V,E)$ and any vertex set $U \subseteq V$, we denote $N_G(U) = \{v \in V \setminus U : v \mathrm{\ has \ a \ neighbor \ in \ }U\}$ the external neighborhood of $U$ in $G$. 

Let us describe the definition of graph minors. Graph $\Gamma$ is a graph minor of graph $G$ if it is isomorphic to a graph obtained from $G$ by deleting edges and vertices and by contracting edges. Note that if a graph $\Gamma$ is a minor of a subgraph $S$ of a graph $G$, $\Gamma$ is also a minor of $G$. The definition is equivalent to the following definition, that if graph $\Gamma$ is a minor of graph $G$, we can decompose $G$ to connected subgraphs which connect to each other like $\Gamma$. We will use this definition later for explicit explanations.

\begin{definition}[Minor, Definition 1 in \cite{krivelevich2009minors}]
  A graph $\Gamma$ is a minor of a graph G if for every vertex u $\in$ $\Gamma$ there is a connected subgraph $G_u$ of G such that all subgraphs $G_u$ are vertex disjoint, and G contains an edge between $G_u$ and $G_{u'}$ whenever $\{u,u'\}$ is an edge of $\Gamma$.
\end{definition}

Next, we will define the product of graphs $G \times H$ of graphs $G = (V_G, E_G)$ and $H = (V_H, E_H)$. The vertex set of $G \times H$ is the Cartesian product $V_G \times V_H$ and an edge is spanned between $(u_G,u_H)$ and $(v_G,v_H)$ if and only if $u_G = v_G$ and $\{u_H, v_H\} \in E_H$, or $u_H = v_H$ and $\{u_G, v_G\} \in E_G$. There are several ways to define graph products but we will use the definition above. In this paper, we particularly use $G \times K_2$, which $K_2$ is the complete graph of two vertices. Given a graph $G = (V,E)$, the graph product $G \times K_2$ is with $2|V|$ vertices and $2|E|+|V|$ edges.

Lastly, we refer to the Vizing's theorem, which is about edge coloring. Edge coloring is to assign colors to edges so that the edges of the same color are not incident. 
\begin{lemma}[Vizing's theorem \cite{diestel2000graduate}]\label{edgecoloring}
  Every simple undirected graph can be edge colored using a number of colors that is the maximum degree or the maximum degree+1. 
\end{lemma}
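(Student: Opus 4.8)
The plan is to prove the nontrivial direction, that $\Delta+1$ colors always suffice (the lower bound $\chi'(G)\geq \Delta$ is immediate, since the $\Delta$ edges meeting a vertex of maximum degree are pairwise incident and hence require pairwise distinct colors). I would argue by induction on the number of edges, coloring the edges one at a time with a fixed palette of $\Delta+1$ colors. The single fact that drives the whole argument is that, since every vertex has degree at most $\Delta$ while the palette has size $\Delta+1$, in any partial proper coloring each vertex has at least one \emph{free} color, i.e.\ a color missing from all of its incident edges.

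For the inductive step I would assume that $G$ with one distinguished edge $uv_0$ removed already carries a proper $(\Delta+1)$-edge-coloring, and show how to recolor so that $uv_0$ can also be colored. The central device is the \emph{Vizing fan} at $u$: a maximal sequence $v_0,v_1,\dots,v_k$ of distinct neighbors of $u$ in which, for each $i\geq 1$, the color of the edge $uv_i$ is free at $v_{i-1}$. I would then fix a color $\alpha$ free at $u$ and a color $\beta$ free at the last fan vertex $v_k$.

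The argument splits according to these two colors. If $\beta$ is also free at $u$, a direct \emph{rotation} of the fan---successively moving the color of $uv_{i+1}$ onto $uv_i$ so that the uncolored edge migrates to the end of the fan, where it can receive $\beta$---finishes the coloring. Otherwise I would use a \emph{Kempe chain}: the maximal path whose edges alternate between colors $\alpha$ and $\beta$. Swapping the two colors along this chain keeps the coloring proper but frees a color at exactly the vertex needed to carry out a (possibly truncated) rotation of the fan. I expect the delicate point to be the case analysis on where the alternating path terminates---in particular excluding the configurations in which the swap would re-enter the fan and block the recoloring---since this is exactly where the \emph{maximality} of the fan is invoked; verifying that each individual recoloring step preserves properness is then routine.
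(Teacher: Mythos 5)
The paper does not prove this lemma at all---it is Vizing's theorem, cited from the textbook \cite{diestel2000graduate}---and your sketch follows exactly the standard fan/Kempe-chain argument given in that reference (induction on edges, free colors, maximal fan at $u$, rotation, $\alpha$/$\beta$-alternating path), so the approach matches the cited proof. Note, however, that the step you defer (``the case analysis on where the alternating path terminates'') is the entire substance of Vizing's argument: until you verify that the $\alpha$/$\beta$-swap cannot simultaneously block both the full rotation and the truncated rotation---using maximality of the fan to rule out the chain re-entering it at a bad vertex---what you have is a correct plan rather than a complete proof.
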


\noindent{\bf Quantum circuits.}
The textbook \cite{nielsen_chuang_2000} is a good reference about notations of quantum computation in our paper.
We will use the Pauli $X$, $Y$ and $Z$ gates, the Hadamard gate and the $S$ and $T$ gates as single qubit gates:
\[\scriptsize
  X = 
  \begin{pmatrix}
      0 & 1 \\
      1 & 0 \\
    \end{pmatrix}\!,\ 
  Y = 
    \begin{pmatrix}
      0 & -i \\
      i & 0 \\
    \end{pmatrix}\!,\ 
  Z = 
    \begin{pmatrix}
      1 & 0 \\
      0 & -1 \\
    \end{pmatrix}\!,\
  H = \frac{1}{\sqrt{2}}
    \begin{pmatrix}
      1 & 1 \\
      1 & -1 \\
    \end{pmatrix}\!,\
  S =
    \begin{pmatrix}
      1 & 0 \\
      0 & -i \\
    \end{pmatrix}\!,\
  T=
    \begin{pmatrix}
      1 & 0 \\
      0 & e^{i\pi/4} \\
    \end{pmatrix}\!,
\]
where $i$ denotes the imaginary unit of complex numbers. (Note the Phase gate $S$ differs from the standard one in \cite{nielsen_chuang_2000}.) We also use the controlled Pauli $Z$ gate (or $CZ = \ket{0} \bra{0} \otimes I + \ket{1} \bra{1} \otimes Z$) as two-qubit gate.

Let us explain about quantum circuits. An $n$-qubit quantum circuit is initialized to $\ket{0^n}$ and then arbitrary gates are applied to the state. We can apply gates at one time if each gate is applied to disjoint sets of qubits. The depth of a quantum circuit is $d$ if the whole operation of circuits can be decomposed to $U_d...U_2U_1$ where each $U_j$ is a tensor product of one- and two-qubit gates which act on disjoint sets of qubits.

Next, we describe measurements of quantum states. Mathematically, (projective) measurements are projections to some orthogonal bases. In this paper, We will use two kinds of measurements with the $X$ basis and the $Y$ basis. The orthogonal two states of $X$ basis are $\{\ket{+},\ket{-}\}$ and the ones of $Y$ basis are $\{\frac{\ket{0}+i\ket{1}}{\sqrt{2}},\frac{\ket{0}-i\ket{1}}{\sqrt{2}}\}$. Note that the measurements of $X$ and $Y$ basis are equivalent to the measurements of the computational basis if we apply $H$ and $HS$ gates before the measurements respectively.\vspace{2mm}

\noindent{\bf Quantum graph states.}
Quantum graph states are a certain type of entangled states corresponding to graphs first introduced by \cite{hein2004multiparty}. Let $G = (V,E)$ be a finite simple graph. Define an associated $|V|$ qubit graph state $\ket{\Phi_G}$ by
\begin{equation}\label{graphstate}
  \ket{\Phi_G} = \left(\prod_{e \in E} CZ_e\right)H^{\otimes|V|}\ket{0^{|V|}}.
\end{equation}
The graph state $\ket{\Phi_G}$ is a stabilizer state with stabilizer group generated by the operators $g_v = X_v\left(\prod_{w:\{w,v\}\in E} Z_w\right)$ for all $v \in V$.\vspace{2mm}

\noindent{\bf Classical circuits.}
A classical circuit is specified by a directed acyclic graphs. Vertices with no incoming and outgoing edges are inputs and outputs respectively and all other vertices are called gates. We must specify a function of each gate $\{0,1\}^k \rightarrow \{0,1\}$ where $k$ is the fan-in. We assume a classical probabilistic circuit receives an arbitrary binary $x$ as an input and outputs a binary $z$ and it also could input a random string $r$ drawn from some arbitrary distribution. We say an input bit and an output bit are correlated iff the value of the output bit depends on the value of the input bit. For each input bit $x_i$, we denote the lightcone $L_C(x_i)$ the set of output bits correlated with $x_i$ through a classical circuit $C$. Likewise, the lightcone $L_C(z_i)$ is the set of input bits correlated with an output bit $z_i$.  In this paper, we are interested in small-depth classical circuits with bounded fan-in. We also say that a classical probabilistic circuit solves the relation $R$ on all inputs if and only if the circuit takes any $x \in \{0,1\}^n$ and a random string $r$ as input and outputs $z \in \{0,1\}^m$ such that $xRz$ with high probability.
\section{Expander graphs and their properties}
Expander graphs are highly sparse but well connected graphs.  Notable applications of the graphs have been found in mathematics and computer science \cite{Barzdin1993,capalbo2002randomness,gromov2012generalizations}. Before giving the definition, we define the expansion ratio $h(G)$ of graph $G$. There are several ways to define the expansion ratio, for example edge expansion, vertex expansion and spectral expansion, but these are related to each other. The way we use is called vertex expansion. 

\begin{definition}[Expansion ratio]\label{doer}
    $h(G) = \mathrm{min}\left\{ \frac{|N_G(U)|}{|U|} \middle|\:\: U \subset V\mathrm{\ such \ that \ } 1 \leq |U| \leq \frac{1}{2}|V|\right\}$
\end{definition}
The definition of expander graphs we use in this paper is as follows.
\begin{definition}[Expander graphs, Definition 3.1.8 in \cite{kowalski2019introduction}]\label{def:exp}
    A family $(\Gamma_i)_{i \in \mathbb{N}}$ of finite non-empty connected graphs $\Gamma_i = (V_i,E_i)$ is an expander family, if there exist constants $d \geq 1$ and $h > 0$, independent of i, such that:
  \begin{enumerate}[(1)]
    \item The number of vertices $|V_i|$ ``tends to infinity'', in the sense that for any N $\geq 1$, there are only finitely many $i \in \mathbb{N}$ such that $\Gamma_i$ has at most $N$ vertices.

    \item For each $i \in \mathbb{N}$, we have $\max_{v \in V_i} deg(v) \leq d$, i.e., the maximum degree of the graphs is bounded independently of $i$.

    \item For each $i \in \mathbb{N}$, the expansion constant satisfies $h(\Gamma_i) \geq h > 0$, i.e., it is bounded away from 0 by a constant independent of i.
  \end{enumerate}
\end{definition}
$h$ and $d$ specify the family of expander graphs. The existence of expander graphs is by no means obvious, but it can be shown using probabilistic approaches or concrete constructions of such graphs \cite{kowalski2019introduction}. In this paper, an expander graph denotes a graph with sufficiently large $i$ of an expander family. 

We want to prove the robustness of expander graphs to arbitrary vertex removals in terms of the size of grid minor, which is Lemma \ref{minorinexpander}. We provide its proof in Appendix \ref{PL}.

\begin{lemma}\label{minorinexpander}
  Let $G=(V,E)$ be a expander graph. If we take a sufficiently small constant $\epsilon > 0$, the graph has a connected component $C$ which contains a $\Omega(|V|^{\frac{1}{4}})\times\Omega(|V|^{\frac{1}{4}})$ grid as a minor after up to an $\epsilon$ fraction of $V$ are adversarially removed.
\end{lemma}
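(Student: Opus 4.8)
The plan is to produce the grid minor in two stages: first find a large \emph{clique} minor inside the part of the graph that survives the adversarial deletion, and then extract a grid minor from the clique. The second stage is essentially free: a complete graph $K_t$ contains every graph on at most $t$ vertices as a subgraph, and in particular the $\lfloor\sqrt{t}\rfloor\times\lfloor\sqrt{t}\rfloor$ grid, which has $\lfloor\sqrt{t}\rfloor^{2}\le t$ vertices. Since a subgraph is a minor and a minor of a minor is a minor, it suffices to exhibit a $K_t$ minor with $t=\Omega(\sqrt{|V|})$ living inside a single connected piece of the corrupted graph; taking square roots then yields the desired $\Omega(|V|^{1/4})\times\Omega(|V|^{1/4})$ grid minor.

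First I would control the damage done by the adversary. Let $R\subseteq V$ with $|R|\le\epsilon|V|$ be the removed set and put $W=V\setminus R$. The key point is that an expander cannot be cut apart cheaply: by \Cref{doer}, every $U\subseteq V$ with $|U|\le|V|/2$ has $|N_G(U)|\ge h|U|$, and since deleting $R$ can destroy at most $|R|$ of these external neighbors we get
\[
  |N_{G[W]}(U)|\ \ge\ |N_G(U)|-|R|\ \ge\ h|U|-\epsilon|V|
\]
for every such $U\subseteq W$. Hence every set of size at least $(2\epsilon/h)|V|$ still expands by a factor $\ge h/2$, so only sets of total size $O(\epsilon|V|/h)$ can fail to expand. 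I would turn this into a large expanding \emph{core} by iteratively discarding subsets that violate the $\frac{h}{2}$-expansion bound inside the current graph and charging the discarded vertices against the deletion budget through the expansion of $G$. For $\epsilon$ a sufficiently small constant this leaves a connected induced subgraph $C$, contained in the giant connected component of $G-R$, with $|C|\ge(1-O(\epsilon))|V|$, degree still bounded by $d$, and vertex expansion bounded below by a positive constant. This is the step I expect to be the main obstacle: because the removal is worst-case, one must carefully control the interplay between the adversarially deleted vertices and the vertices shed during the peeling, so that the accounting still leaves a linear-sized expanding core rather than having the two deletion processes compound.

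With such a core $C$ in hand, the final stage invokes the theorem on clique minors in expanding graphs from~\cite{krivelevich2009minors}: a bounded-degree graph on $m$ vertices all of whose subsets of size $\le m/2$ expand by a constant factor contains a $K_t$ minor with $t=\Omega(\sqrt{m})$ (up to possible logarithmic factors that do not affect the exponent). This scaling is best possible, since a $K_t$ minor requires $\binom{t}{2}$ edges between its branch sets while a bounded-degree graph has only $O(m)$ edges in total. Applying this to $C$ with $m=|C|=\Omega(|V|)$ yields a $K_t$ minor with $t=\Omega(\sqrt{|V|})$ inside the single connected component $C$, and combining with the first paragraph gives the claimed $\Omega(|V|^{1/4})\times\Omega(|V|^{1/4})$ grid minor there, completing the proof.
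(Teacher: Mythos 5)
Your first stage (recovering a large connected expanding core from the corrupted graph) is sound in outline and is essentially what the paper does: the paper proves that a component $C$ with $|C|>|V|/2$ survives and is a $\left[\frac{|C|}{3},\frac{2|C|}{3}\right]$-expander, then invokes Lemma~2.4 of \cite{krivelevich2019expanders} to peel it down to a $\left[1,\frac{|C'|}{2}\right]$-expander --- the same peeling you sketch, with the compounding-deletions issue you correctly flag handled by that cited lemma. The gap is in your second stage. The clique-minor theorem you invoke from \cite{krivelevich2009minors} does \emph{not} give $t=\Omega(\sqrt{m})$ for bounded-degree expanders; it gives $t=\Omega(\sqrt{m/\log m})$, and your own edge-counting remark should be a warning sign: $\Omega(\sqrt{m})$ would sit exactly at the extremal ceiling $\binom{t}{2}\le dm/2$, which is not what is known (or believed) to be achievable for every bounded-degree expander family. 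The logarithmic loss is not cosmetic here. Since an $s\times s$ grid is a minor of $K_t$ only when $s^2\le t$ (minors of $K_t$ have at most $t$ vertices), your route yields a grid of side $\Omega\bigl((|V|/\log|V|)^{1/4}\bigr)$, which is strictly weaker than the stated $\Omega(|V|^{1/4})$; conversely, to reach side $|V|^{1/4}$ through a clique intermediate you would need a $K_t$ minor with $t\ge |V|^{1/2}$, i.e.\ a clique minor at the edge-counting limit, which the cited literature does not provide. So the proposal as written does not prove Lemma~\ref{minorinexpander}.

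The structural lesson is that routing a sparse target graph through a clique is inherently wasteful: the clique must be large enough to contain the grid as a \emph{subgraph}, so you pay quadratically in the grid's vertex count. The paper instead uses the minor-universality theorem for expanders (Claim~\ref{claim}, from Corollary~8.3 of \cite{krivelevich2019expanders} and \cite{chuzhoy2019large}): a $\left[1,\frac{|V|}{2}\right]$-expander contains \emph{every} graph with $O(|V|/\log|V|)$ vertices and edges as a minor. The $|V|^{1/4}\times|V|^{1/4}$ grid has only $O(|V|^{1/2})$ vertices and edges, far below that threshold, so it embeds directly with enormous slack and no logarithmic loss --- exploiting precisely the sparsity of the grid that your clique detour throws away. (As a side remark, a polylogarithmic loss in the grid side would in fact be harmless for the counting arguments in Section~\ref{sec:proof} of the paper, but it does not establish the lemma as stated; replacing your clique step by the universality theorem fixes the proof and also lets you drop the grid-from-clique extraction entirely.)
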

\section{Quantum nonlocality of a triangle graph state}

In specific circumstances, local classical circuits cannot simulate measurement outcomes of entangled quantum states. This is called $quantum \ nonlocality$. In this section, we explain it occurs in a triangle graph state, as first shown in \cite{barrett2007modeling}. We will use this property to show the hardness of classical circuits to solve the relation problem in Section 5. 

\begin{figure}[htbp]
  \centering
  \includegraphics[clip,width=7.0cm]{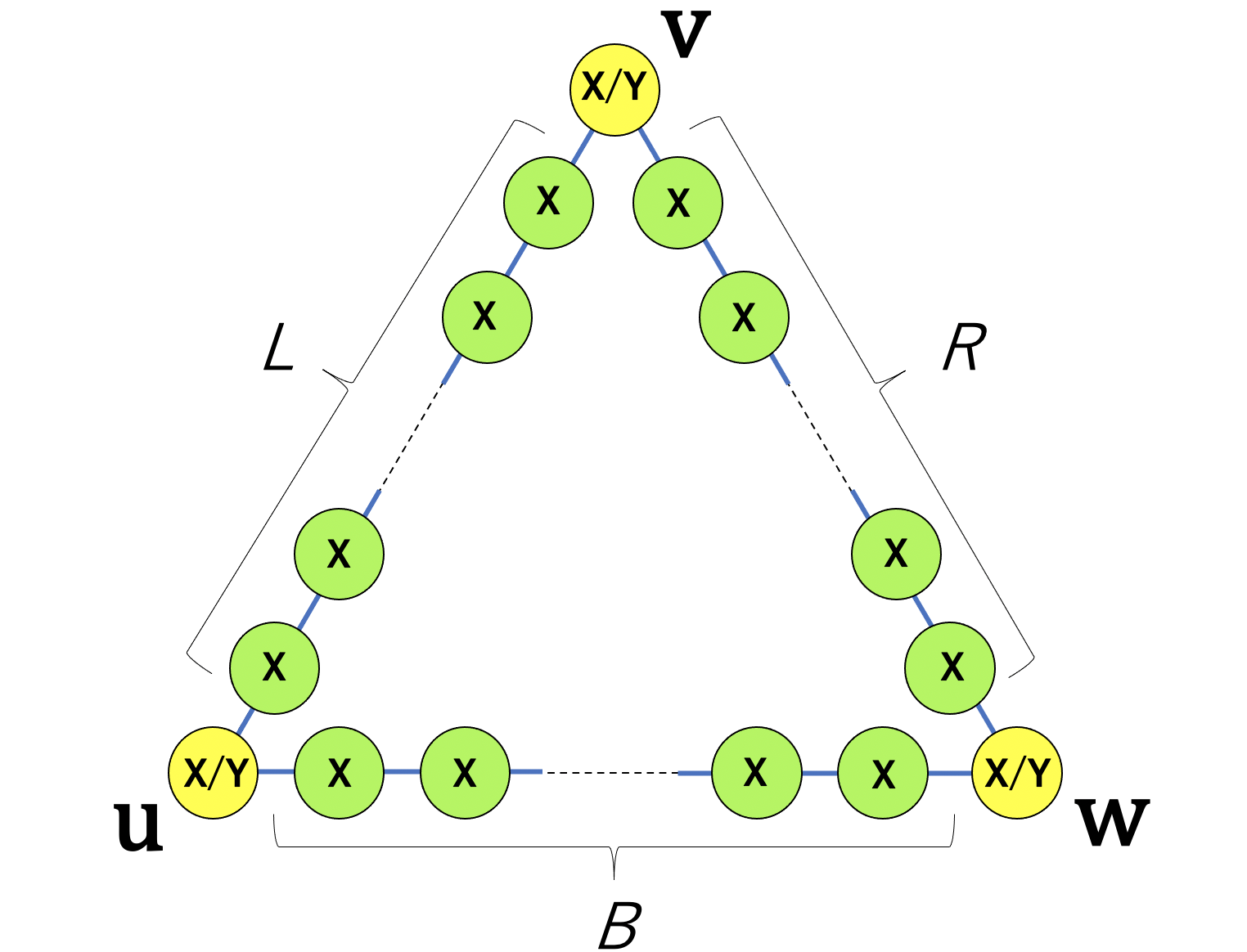}
  \caption{We consider an even cycle $\Gamma$ and three vertices $u,v,w$ such that all pairwise distances are even. $L,R,B$ are the region between the three vertices. Each qubit of the corresponding graph state is measured by the $X$ or $Y$ basis.}
  \label{triangle}
\end{figure}

First, we note about properties of a graph state in a triangle shape. Let $\Gamma$ be a triangle such that the distance between three vertices $u,v,w$ are all even and $\ket{\psi_\Gamma}$ be a graph state  for $\Gamma$ as in Equation~(\ref{graphstate}). We define $L,R,B$ as the vertices between $u$ and $v$, $v$ and $w$, $u$ and $w$, and $M$ as a total number of vertices in $\Gamma$. Also define $L_{odd},L_{even},R_{odd},R_{even},B_{odd},B_{even}$ as vertices of $L,R,B$ which have odd and even distance from $u,v,w$ respectively. The three bits $x = x_{u}x_{v}x_{w}$ decide the measurement basis of $u,v,w$ (the $X$ basis if $x_i$ = 0 and the $Y$ basis if $x_i$ = 1) and the other vertices are measured in the $X$ basis. We denote $\tau(x)$ possible measurement outcomes of all qubits of $\ket{\psi_\Gamma}$ for input $x$, i.e., $\tau(x) = \left\{ z \in \{0,1\}^{M} : \braket{z|H^{\otimes{M}}{S_u}^{x_u}{S_v}^{x_v}{S_w}^{x_w}|\psi_\Gamma} \neq 0 \right\}$. We consider a relationship between input $x$ and output $z \in \tau(x)$. Define the following summations:
\begin{equation}\nonumber
  z_L = \bigoplus_{i \in L_{odd}} z_i \hspace{10pt} z_R = \bigoplus_{i \in R_{odd}} \hspace{10pt} z_B = \bigoplus_{i \in B_{odd}} z_i \hspace{10pt} z_E = \bigoplus_{i \in \{u,v,w\} \cup R_{even} \cup L_{even} \cup B_{even}} z_i.
\end{equation}

\begin{claim}[Claim 3 in \cite{Bravyi_2018}]\label{nonlocalityintriangle}
  Let $x = x_u x_v x_w \in \{0,1\}^3$ and suppose $z \in \tau(x)$. Then $z_R \oplus z_B \oplus z_L=0$. Moreover, if $x_u \oplus x_v \oplus x_w = 0$ then
  \begin{eqnarray*}
    (x_u x_v x_w = 000) & z_E = 0, \hspace{33pt}
    (x_u x_v x_w = 110) & z_E \oplus z_L = 1, \\
    (x_u x_v x_w = 101) & z_E \oplus z_B = 1, \hspace{10pt} 
    (x_u x_v x_w = 011) & z_E \oplus z_R = 1.
  \end{eqnarray*}
\end{claim}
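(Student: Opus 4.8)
The plan is to derive every parity relation from the stabilizer group of $\ket{\psi_\Gamma}$ generated by the $g_v=X_v\prod_{w:\{w,v\}\in E}Z_w$. The underlying principle I would isolate first is this. Write $U=H^{\otimes M}\prod_{t\in\{u,v,w\}}S_t^{x_t}$, so that $\tau(x)=\{z:\braket{z|U|\psi_\Gamma}\neq 0\}$, and observe that $U$ conjugates the observable actually measured on each qubit — namely $X_i$ on every $X$-measured qubit and $Y_t$ on a special vertex with $x_t=1$ — into $Z_i$ (indeed $HXH=Z$ and, with the paper's $S=\mathrm{diag}(1,-i)$, one has $HSYS^\dagger H=Z$). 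Consequently, if I can exhibit a product of stabilizer generators equal, \emph{as a Pauli operator}, to $\mu\bigotimes_{i\in T}M_i$ for some sign $\mu\in\{\pm 1\}$, where $T$ is a set of qubits and $M_i$ is exactly the single-qubit observable measured on qubit $i$, then conjugating by $U$ turns it into $\mu\bigotimes_{i\in T}Z_i$, and comparing eigenvalues against any $z\in\tau(x)$ forces $\bigoplus_{i\in T}z_i=(1-\mu)/2$. The whole claim then reduces to choosing, for each relation, a vertex subset $S'$ so that $\prod_{a\in S'}g_a$ acts on each qubit either trivially or precisely as the measured Pauli, and reading off the global sign $\mu$.

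For the identity $z_R\oplus z_B\oplus z_L=0$, which must hold for every $x$, I would take $S'=L_{odd}\cup R_{odd}\cup B_{odd}$. Because every path has even length, each odd-distance vertex has both its path-neighbours among the even-distance vertices, so $\prod_{a\in S'}g_a$ leaves a bare $X$ on each chosen vertex; meanwhile every special vertex is adjacent to exactly one chosen vertex on each of its two incident paths, hence receives $Z^2=I$ and drops out. Thus $\prod_{a\in S'}g_a=\bigotimes_{i\in L_{odd}\cup R_{odd}\cup B_{odd}}X_i$ with $\mu=+1$, and since all these qubits are path-interior (so $X$-measured for every $x$), the principle above yields $z_L\oplus z_R\oplus z_B=0$.

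For the four even-parity cases I would start from $S_0'=\{u,v,w\}\cup L_{even}\cup R_{even}\cup B_{even}$, where the same even-length cancellation gives $\prod_{a\in S_0'}g_a=X_uX_vX_w\bigotimes_{i\in L_{even}\cup R_{even}\cup B_{even}}X_i$ with $\mu=+1$; for $x=000$ all these vertices are $X$-measured, so the principle directly gives $z_E=0$. To handle $110$, $101$, $011$ I would multiply by one of the ``path'' stabilizers $\prod_{a\in L_{odd}}g_a=Z_uZ_v\bigotimes_{i\in L_{odd}}X_i$, $\prod_{a\in R_{odd}}g_a=Z_vZ_w\bigotimes_{i\in R_{odd}}X_i$, $\prod_{a\in B_{odd}}g_a=Z_uZ_w\bigotimes_{i\in B_{odd}}X_i$ (each computed exactly as in the previous step, the surviving $Z$'s sitting on the two path endpoints). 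Multiplying by the $L$-stabilizer turns $X_uX_v$ into $X_uZ_u\otimes X_vZ_v=(-iY_u)(-iY_v)=-\,Y_uY_v$ and appends an $X$ on each $L_{odd}$-vertex, so $\prod_{a\in S_0'\cup L_{odd}}g_a=-\,Y_uY_vX_w\bigotimes_{i\in L\cup R_{even}\cup B_{even}}X_i$. For $x=110$ this is precisely a product of measured observables with $\mu=-1$, and the principle gives $\bigoplus_{i\in\{u,v,w\}\cup L\cup R_{even}\cup B_{even}}z_i=z_E\oplus z_L=1$; the cases $101$ and $011$ are identical with $B_{odd}$ and $R_{odd}$ substituted for $L_{odd}$. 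Note that only the even-weight subsets $\emptyset,\{u,v\},\{u,w\},\{v,w\}$ of special vertices can be flipped this way, which is exactly why a deterministic relation appears precisely when $x_u\oplus x_v\oplus x_w=0$.

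The step I expect to demand the most care is the sign bookkeeping: one must verify that the phase arising when rewriting $X_tZ_t$ as $-iY_t$ is real and combines to the claimed $\mu=-1$, which rests on flipping an \emph{even} number of special vertices and on the stabilizer generators genuinely commuting — the single-qubit anticommutation of $g_u$ and $g_{p_1}$ at $u$ being cancelled by a second anticommutation at their shared neighbour. Equally, the vanishing of the $Z$'s at $u,v,w$ and on the interior even vertices relies entirely on all three pairwise distances being even. I would therefore state the even-distance hypothesis explicitly and check the neighbour counts once, after which each of the four cases follows from the same template.
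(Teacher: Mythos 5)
Your proof is correct, and it follows essentially the same approach as the proof the paper relies on: the paper does not prove this claim itself but cites Claim 3 of \cite{Bravyi_2018}, whose proof is exactly this stabilizer-group computation --- products of the generators $g_a$ over $L_{odd}\cup R_{odd}\cup B_{odd}$, over $\{u,v,w\}\cup L_{even}\cup R_{even}\cup B_{even}$, and over the individual path sets, with the sign $(-i)^{2}=-1$ arising from $X_tZ_t=-iY_t$ at the two flipped special vertices. Your conjugation identities $HXH=Z$ and $HSYS^{\dagger}H=Z$ under the paper's convention $S=\mathrm{diag}(1,-i)$, the cancellation arguments relying on all pairwise distances being even, and the eigenvalue bookkeeping $\bigoplus_{i\in T}z_i=(1-\mu)/2$ are all accurate.
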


The following lemma is about quantum nonlocality in graph states of triangles and similar to Lemma~3 in Section 4.1 of \cite{Bravyi_2018}. It shows that when we assume a classical circuit has a kind of locality, the classical circuit cannot satisfy the relation of Claim \ref{nonlocalityintriangle} as inputs and outputs. We will give the proof in Appendix \ref{b}.

\begin{lemma}[\cite{barrett2007modeling,Bravyi_2018,le2019average}]\label{intersect}
  Consider a classical circuit which takes as an input a bit string $x = x_u x_v x_w \in \{0,1\}^3$ and a random string $r$, and outputs $z \in \{0,1\}^M$ which are corresponding to vertices of $\Gamma$. Let us assume output bits in $L$ depend on $r$ and at most one geometrically near input bit, which is either $x_u$ or $x_v$. Similarly, we assume output bits in $R$ and $B$ depend on $r$ and at most one geometrically near input bit.
  Then, the classical circuit $C$ cannot output $z \in \tau(x)$ with high probability.
\end{lemma}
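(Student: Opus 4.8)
The plan is to argue by contradiction, exploiting the constraints of Claim~\ref{nonlocalityintriangle} together with the assumed locality structure of the circuit. First I would observe that the locality assumption induces a partition of the output bits $z$ into three groups---those in $L$, those in $R$, and those in $B$---where the bits in each group depend only on the random string $r$ and on a single one of the three input bits $x_u, x_v, x_w$. In particular, I would set up the following book-keeping: by hypothesis, $z_L$ (a fixed parity of output bits in $L$) is determined by $r$ and by at most one of $x_u, x_v$; similarly $z_R$ is a function of $r$ and at most one of $x_v, x_w$, and $z_B$ of $r$ and at most one of $x_u, x_w$. The aim is to show that no deterministic assignment consistent with this structure can satisfy all three nontrivial linear relations in Claim~\ref{nonlocalityintriangle} simultaneously.

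Next I would fix the randomness $r$ and reduce to a deterministic statement, since if the circuit succeeds with high probability over $r$ there must exist a fixed $r$ for which it succeeds on all four relevant inputs $x\in\{000,110,101,011\}$ simultaneously (these all satisfy $x_u\oplus x_v\oplus x_w=0$, and there are only finitely many inputs, so a union bound defeats a ``high probability'' success assumption if each fixed $r$ failed on some input). With $r$ fixed, each of $z_L, z_R, z_B$ becomes a Boolean function of a single input variable, hence an affine function of that variable. I would then write $z_E = z_E(x)$ in terms of these, using the relation $z_R\oplus z_B\oplus z_L=0$ to eliminate one quantity, and substitute into the three equations $z_E\oplus z_L = 1$, $z_E\oplus z_B = 1$, $z_E\oplus z_R = 1$ for the inputs $110, 101, 011$ respectively, together with $z_E = 0$ for input $000$. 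The core of the argument is a finite parity-counting contradiction: summing or comparing these affine constraints across the four inputs forces an inconsistency, because each $z_L, z_R, z_B$ can ``feel'' only the flip of one input bit, whereas the constraints collectively demand a parity that depends jointly on all three.

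The main obstacle I anticipate is carefully tracking \emph{which} single input bit each output region is allowed to depend on, and ruling out all the legal assignments of these dependencies at once. The assumption is geometric (``at most one geometrically near input bit''), so a priori $z_L$ might depend on $x_u$, or on $x_v$, or on neither; I would need to handle each admissible case, but the symmetry of the triangle under cyclic permutation of $(u,v,w)$ and $(L,R,B)$ should let me reduce the case analysis substantially. The cleanest route is probably to note that under any such assignment the quantity $z_E\oplus z_L\oplus z_R\oplus z_B$, evaluated as a function of the inputs, is forced by the locality to be a fixed affine function of the $x_i$ with limited cross-dependence, yet Claim~\ref{nonlocalityintriangle} pins its values on the four inputs to a pattern that no such affine function realizes---mirroring the standard GHZ/Mermin-style nonlocality contradiction used in \cite{barrett2007modeling,Bravyi_2018}. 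Since this is exactly the algebraic heart of the triangle-nonlocality argument, I would lean on that established computation, deferring the explicit parity bookkeeping to Appendix~\ref{b}.
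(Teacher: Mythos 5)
Your overall architecture (fix the randomness $r$, reduce to a deterministic statement via averaging, then derive a GHZ-style parity contradiction from Claim~\ref{nonlocalityintriangle} plus the locality of $z_L,z_R,z_B$) matches the paper's proof in Appendix~\ref{b}, but your reduction step has a genuine gap: you restrict attention to the four even-parity inputs $\{000,110,101,011\}$ and claim that success on these four alone already yields a parity contradiction. It does not. The constraint $z_R\oplus z_B\oplus z_L=0$ from the first part of Claim~\ref{nonlocalityintriangle} must also be invoked at an \emph{odd}-parity input, concretely at $x=111$. Indeed, the assignment $z_E\equiv 0$, $z_L=x_u$, $z_R=x_v$, $z_B=x_w$ is perfectly local (it even satisfies your stronger reading of the hypothesis) and satisfies every constraint that Claim~\ref{nonlocalityintriangle} imposes on the four even-parity inputs: $z_E=0$ at $000$; $z_E\oplus z_L=1$ at $110$; $z_E\oplus z_B=1$ at $101$; $z_E\oplus z_R=1$ at $011$; and $z_L\oplus z_R\oplus z_B=x_u\oplus x_v\oplus x_w=0$ at all four of these points. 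The only constraint it violates is $z_L\oplus z_R\oplus z_B=0$ at $x=111$, where that sum equals $1$. This is precisely why the paper's Claim~\ref{affine} requires $q_1\oplus q_2\oplus q_3=0$ for \emph{all} $(b_1,b_2,b_3)\in\{0,1\}^3$, and why the paper's proof splits into cases over all eight inputs: for each fixed $r$, either $z_R\oplus z_B\oplus z_L=1$ at some $x_0\in\{0,1\}^3$, in which case the circuit already fails at $x_0$ by the first part of Claim~\ref{nonlocalityintriangle}, or the sum vanishes identically, in which case Claim~\ref{affine} forces one of the four even-input equations to fail. Your union-bound step is fine in principle, but it must be run over all eight inputs (or at least the five inputs $\{000,110,101,011,111\}$), not only the four you call ``relevant''; as written, the ``finite parity-counting contradiction'' you anticipate does not exist.

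A second, smaller inaccuracy: you assert that $z_L$ ``is determined by $r$ and by at most one of $x_u,x_v$'' and later that each of $z_L,z_R,z_B$ ``becomes a Boolean function of a single input variable.'' The hypothesis is per output bit: each bit in $L$ depends on at most one of $x_u,x_v$, but different bits may depend on different ones, so $z_L$ is in general an affine function of \emph{both} $x_u$ and $x_v$ (with no $x_ux_v$ cross term), and similarly for $z_R$ and $z_B$. This generality is how the lemma is actually applied in Section~\ref{sec:proof}, so the contradiction must be derived at the level of affine functions of two variables, as in Claim~\ref{affine}; your later phrase ``affine function with limited cross-dependence'' is the correct formulation, while the single-variable reduction is not faithful to the hypothesis.
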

\section{Proof of separation of depth between quantum and classical circuits to solve Graph State Sampling problem}\label{sec:proof}

In this section, we define Graph State Sampling problem
and prove a separation of depth between quantum circuits and classical circuits. We consider an almost all induced subgraph $S$ of a graph $G \times K_2$ such that $G$ is an expander graph. Then, we prove $\rho(G \times K_2)$ can be solved on all inputs by a constant-depth quantum circuit on $\Theta(|G \times K_2|)$ qubits, but $\Omega(\log|S|)$ depth is required for any classical probabilistic circuits to solve $\rho(S)$ on all inputs.

\subsection{Definition of Graph State Sampling problem $\rho(G)$}\label{def:GSS}
In this subsection, for any graph $G$, we define Graph State Sampling problem $\rho(G)$. 

The relation is defined as a subset of $\{0,1\}^{|V|+|E|} \times \{0,1\}^{|V|}$ and thus consists of pairs $(x,z)$, where $x \in \{0,1\}^{|V|+|E|}$ represents the input and $z\in\{0,1\}^{|V|}$ represents the output.
Each bit of $x$ corresponds to a vertex or a edge. The string $x$ decides the quantum graph state and the measurement bases: a $CZ$ gate corresponding to edge $e$ is applied if $x_e = 1$, and the qubit corresponding to a vertex $v$ is measured in the~$X$ basis if $x_v$ is 0, or in the $Y$ basis if $x_v$ is 1. The quantum state $\ket{\psi_x}$ for each $x$ before the measurement in the computational basis is thus:
\begin{equation*}
  \ket{\psi_x} = H^{\otimes|V|} \prod_{x_v=1} S_v \left(\prod_{x_e = 1} CZ_{e}\right)H^{\otimes|V|}\ket{0^{|V|}}.
\end{equation*}
The output $z \in \{0,1\}^{|V|}$ of the relation is any possible outcome of the measurement of this quantum state (note that there are possibly several measurement outcomes $z$ for each $x$). Since the probability of measurement results of each binary string $z$ is $|\braket{z|\psi_x}|^2$, the definition of $\rho(G)$ is as follows. 
\begin{definition}
  Given a graph G,
  \begin{equation}\nonumber
    \rho(G) = \{(x,z)| x \in \{0,1\}^{|V|+|E|} \ \mathrm{and} \ z \in \{0,1\}^{|V|} \mathrm{ \ such \ that \ } |\braket{z|\psi_x}|^2>0 \}.
  \end{equation}
\end{definition}

\subsection{Constant-depth quantum circuits to solve Graph State Sampling problem}
The following is easily shown by Lemma \ref{edgecoloring}.

\begin{lemma}\label{constantquantum}
  When the maximum degree of graph G = (V,E) is bounded by a constant, $\rho(G)$ on all inputs can be solved with certainty by a constant-depth quantum circuit on $\Theta(|G|)$ qubits composed of one- and two-qubit gates.
\end{lemma}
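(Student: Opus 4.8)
The plan is to construct a quantum circuit that directly prepares the state $\ket{\psi_x}$ from Section~\ref{def:GSS} and then measures all qubits in the computational basis. Correctness on all inputs is then immediate and exact: any string $z$ obtained from such a measurement satisfies $|\braket{z|\psi_x}|^2 > 0$, which is precisely the condition for $(x,z)\in\rho(G)$. The circuit uses one qubit per vertex, so it acts on $|V| = |G|$ qubits, matching the claimed $\Theta(|G|)$ count. Thus the only substantive point is to show that $\ket{\psi_x}$ can be prepared in depth independent of $|G|$.

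First I would read the gate layers off the definition of $\ket{\psi_x}$, applied to $\ket{0^{|V|}}$ from right to left: a Hadamard layer $H^{\otimes|V|}$, then the controlled-$Z$ gates $CZ_e$ for all edges with $x_e=1$, then the phase gates $S_v$ for all vertices with $x_v=1$, and finally a second Hadamard layer $H^{\otimes|V|}$. Each Hadamard layer acts on pairwise disjoint single qubits and so has depth $1$, and the same holds for the $S$ layer since the gates $S_v$ act on distinct qubits. These three layers therefore contribute depth $3$ in total.

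The key step, and the only place where the degree bound is used, is parallelizing the controlled-$Z$ stage. A gate $CZ_e$ acts on the two endpoints of the edge $e$, so two such gates can be scheduled in the same time step exactly when their edges share no vertex. I would invoke Vizing's theorem (Lemma~\ref{edgecoloring}): because the maximum degree of $G$ is bounded by a constant $d$, its edges admit a proper edge-coloring with at most $d+1$ colors. Each color class is a matching, i.e., a set of pairwise vertex-disjoint edges, so all $CZ$ gates of one color can be applied simultaneously in a single depth-$1$ layer; performing the color classes in sequence realizes $\prod_{x_e=1}CZ_e$ in depth at most $d+1$.

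Putting the stages together gives total depth at most $(d+1)+3 = d+4$, a constant since $d$ is fixed, which completes the argument. I do not expect any real obstacle here: the construction is a direct transcription of the state definition into parallel gate layers, and Vizing's theorem does the only nontrivial work, namely bounding the depth of the entanglement-creating $CZ$ layer by a constant independent of the graph size.
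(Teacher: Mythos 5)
Your parallelization core is the same as the paper's (Vizing's theorem, Lemma~\ref{edgecoloring}, used to schedule the entangling gates into at most $d+1$ matchings), but there is a genuine gap in how your circuit receives the input. As you describe it, \emph{which} gates appear in the circuit (``$CZ_e$ for all edges with $x_e=1$, then the phase gates $S_v$ for all vertices with $x_v=1$'') depends on $x$, so you have constructed a family of circuits indexed by the input rather than one fixed circuit that takes $x$ as input. In the setting of this paper that distinction is not pedantic: $\rho(G)$ is a relation problem, and if the circuit is allowed to depend on the input then the problem is trivial for shallow classical circuits as well --- for each fixed $x$ one can hard-wire a single valid output string $z$ with $|\braket{z|\psi_x}|^2>0$ and ``solve'' that instance in depth zero. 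The separation of Theorem~\ref{main} is only meaningful because both the quantum upper bound and the classical lower bound concern fixed circuits receiving $x$ as input, exactly as in the definition of solving a relation given in the preliminaries for classical circuits.

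The paper's proof closes this gap by making the input part of the quantum state: it prepares $\ket{x}\otimes\ket{0^{|V|}}$ on $|E|+2|V|=\Theta(|G|)$ qubits and replaces your input-dependent gates by \emph{quantum-controlled} gates --- a $CCZ$ gate controlled on the qubit holding $x_e$, and a $CS$ gate controlled on the qubit holding $x_v$. Vizing's theorem is then applied to the $CCZ$ layer exactly as in your argument, and, since $CCZ$ acts on three qubits, each $CCZ$ (and $CS$) gate is further decomposed into a constant-depth subcircuit of one- and two-qubit gates so as to meet the lemma's gate-set requirement. Your construction can be salvaged by stating explicitly that every gate is \emph{classically} controlled by the corresponding input bit (this is the model used by Bravyi, Gosset and K{\"o}nig), but that is a different circuit model from the one this paper works with, and it must be declared rather than left implicit; note also that your qubit count of $|V|$ then excludes the register holding the input, whereas the paper's $\Theta(|G|)$ count includes it.
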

\begin{proof}
  The initial state $\ket{x} \otimes \ket{0^{|V|}}$ is prepared on $|E|+2|V| = \Theta(|G|)$ qubits.
  We apply $H^{\otimes|V|}$ to the last $|V|$ qubits and then controlled-$CZ$ ($CCZ$) gates and controlled-$S$ ($CS$) gates that apply $CZ_e$ if $x_e = 1$ and $S_v$ if $x_v = 1$. Since $G$ is edge colorable with a constant number from Lemma \ref{edgecoloring} and $CCZ$ gates corresponding to edges assigned the same color can be applied simultaneously (since they act on disjoint sets of qubits), the total depth of $CCZ$ gates can be bounded by a constant. $CS$ gates can also be applied in constant depth. We finally apply $H^{\otimes|V|}$ to the last $|V|$ qubits and measure these qubits in the computational basis, which gives a string $z$ such that $(x,z)\in \rho(G)$. 
  
  The total depth of this circuit can be bounded by a constant. (Note that each $CCZ$ and $CS$ gate can be implemented in constant depth using our elementary gates \cite{amy2013meet}.) We refer to Figure \ref{cdqc} for an illustration.
\end{proof}

\begin{figure}[htbp]
  \centering
  \includegraphics[clip,width=10.0cm]{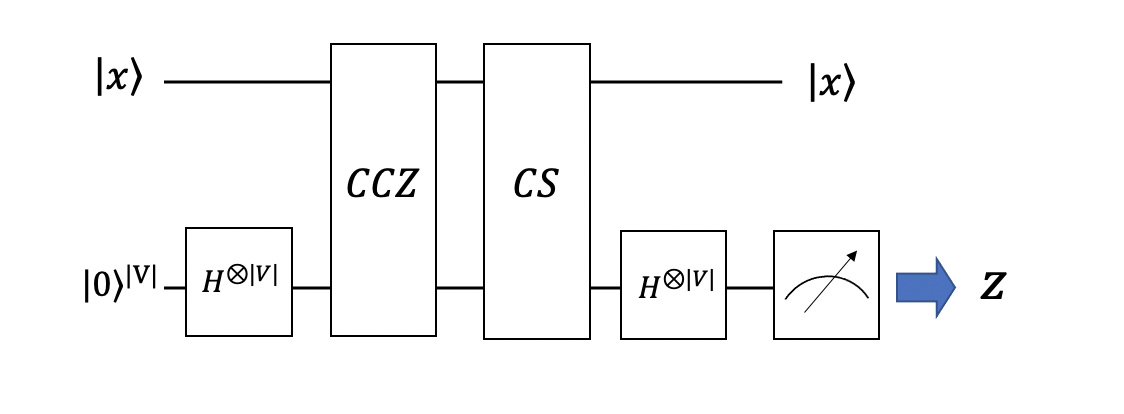}
  \caption{Constant-depth quantum circuit to solve $\rho(G)$.}
  \label{cdqc}
\end{figure}

\subsection{Hardness to solve Graph State Sampling problem with shallow classical circuits}

In this subsection, we prove Theorem \ref{main}, and especially the classical hardness. 
The impossibility argument in Section 4 assumed classical circuits had a kind of geometrical locality. The lower bounds in this section, however, do not require any geometrical locality of shallow classical circuits.
The proofs are similar to the proofs of the results of Section 4.2 in \cite{Bravyi_2018}, with the notable exception of Claim \ref{exception} and the discussion afterwards (in particular, the definition of boxes), which are specifically tailored for the expander graphs we consider.

\subsubsection{Good and bad vertices}

To begin with, we define ``good'' and ``bad'' vertices. In a shallow classical circuit, most input bits are not correlated with many output bits. We call a vertex ``bad'' if the corresponding input bit are correlated with many output bits. Here is the formal definition.
\begin{definition}
  Given a graph $G = (V,E)$, we consider $\rho(G)$ and a classical probabilistic circuit $C$ for it. Then, a vertex $v \in V$ is good if $L_C(x_v) = O(|V|^{\frac{1}{16}})$ and bad if $v$ is not good.
\end{definition}
The following claim is similar to Claim 5 in \cite{Bravyi_2018}.
\begin{claim}\label{bad}
  Let $G = (V,E)$ be a graph such that the maximum degree is bounded by a constant and $C$ be a classical probabilistic circuit for $\rho(G)$. Suppose the fan-in is bounded by a constant $K$ and the depth $d$ is less than $\frac{\log|V|}{32 \log{K}}$ , the number of bad vertices is $o(|V|)$. 
\end{claim}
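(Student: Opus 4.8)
The plan is to bound the \emph{total} lightcone size summed over all vertex input bits, and then argue via a Markov-type counting step that only few vertices can have an individually large lightcone. The whole argument is driven by the backward-lightcone bound for bounded fan-in circuits together with a double-counting identity.

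First I would invoke the standard backward-lightcone estimate. In a circuit of depth $d$ with fan-in at most $K$, each output bit $z_j$ can be traced back through at most $d$ layers of gates, each of fan-in $\leq K$, so its set of correlated input bits satisfies $|L_C(z_j)| \leq K^d$. Since the output of $\rho(G)$ is a string in $\{0,1\}^{|V|}$, there are exactly $|V|$ output bits, and hence $\sum_{j} |L_C(z_j)| \leq |V| \cdot K^d$. This is the only place where the bounded fan-in hypothesis is genuinely used, and it is the step that must be justified with care: the bound $K^d$ applies to the \emph{backward} lightcone of an output and relies on fan-in (not fan-out) being bounded.

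Next I would use the double-counting identity. The relation ``$x_i$ is correlated with $z_j$'' is a single symmetric relation between input indices and output indices, so the number of correlated pairs counted from the input side equals the number counted from the output side:
\[
  \sum_{\text{input } i} |L_C(x_i)| \;=\; \sum_{\text{output } j} |L_C(z_j)| \;\leq\; |V| \cdot K^d.
\]
Restricting the left-hand sum to the vertex input bits $x_v$ (discarding the edge bits) only shrinks it, so $\sum_{v \in V} |L_C(x_v)| \leq |V| \cdot K^d$. Substituting the depth hypothesis $d < \frac{\log|V|}{32\log K}$ gives $K^d < |V|^{1/32}$, and therefore $\sum_{v \in V} |L_C(x_v)| < |V|^{\,1+\frac{1}{32}}$.

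Finally I would apply the averaging step. A vertex is bad exactly when $|L_C(x_v)|$ exceeds the good threshold $\Theta(|V|^{1/16})$. If $B$ denotes the number of bad vertices, then each contributes more than $\Theta(|V|^{1/16})$ to the sum, so $B \cdot \Theta(|V|^{1/16}) < |V|^{\,1+\frac{1}{32}}$, whence
\[
  B \;=\; O\!\left(|V|^{\,1+\frac{1}{32}-\frac{1}{16}}\right) \;=\; O\!\left(|V|^{\,\frac{31}{32}}\right) \;=\; o(|V|).
\]
The argument itself is short, so there is no single hard obstacle; the care-points are (a) the backward-lightcone bound $K^d$, which is where bounded fan-in enters, and (b) the double-counting identity, which must be stated precisely because lightcones are defined separately for inputs and outputs. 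The exponent bookkeeping ($1/32$ from the depth bound, $1/16$ from the good threshold, combining to the exponent $31/32<1$) is routine but must be aligned with the constants fixed in the definitions.
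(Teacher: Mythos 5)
Your proposal is correct and follows essentially the same route as the paper: the backward-lightcone bound $|L_C(z_j)| \leq K^d < |V|^{1/32}$, a double-counting of correlated input--output pairs (which the paper phrases as counting edges of a bipartite graph from the output side), and an averaging step showing at most $O(|V|^{31/32}) = o(|V|)$ vertices can have lightcones of size $\Omega(|V|^{1/16})$. Your observation that restricting the sum to vertex bits only shrinks it even makes the paper's remark about $|x| = \Theta(|V|)$ unnecessary, but the argument is the same.
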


\begin{proof}
  Since the number of correlated input bits increases by at most $K$ times when the depth increases by~1,
  \begin{equation}\label{zlimit}
    |L_C(z_i)| \leq K^d < |V|^{\frac{1}{32}} \hspace{10pt} \mathrm{for} \ \mathrm{all} \ v \in V.
  \end{equation}
  Let us consider a bipartite graph whose vertices are respective bits of $x$ and $z$, and a edge is spanned if and only if $x_i$ and $z_j$ are correlated. Since the maximum degree of $G$ is bounded by a constant, we have $|x| = |V| + |E| = \Theta(|V|)$. From Equation (\ref{zlimit}) and considering edges spanned from $z$, the total number of edges is limited by $|V| \cdot |V|^{\frac{1}{32}}$. Since bad vertices are correlated with $\Omega(|V|^{\frac{1}{16}})$ output bits, the total number of $x_v$ such that $v$ is bad is $O(|V|^{\frac{31}{32}})$ and this means the number of bad vertices is $o(|V|)$.
\end{proof}

\subsubsection{Proof of Theorem \ref{main}}

Before the proof, we rewrite Theorem \ref{main} using the notations we defined. The reason we consider the graph product $G \times K_2$ is to take a cycle which has even length for using Lemma~\ref{intersect}.
\begin{theorem}\label{reTh}
 There exist constants $\alpha > 0$ and $\epsilon > 0$ such that the following holds for all sufficiently large expander graphs $G$:
 \begin{enumerate}[(i)]
     \item $\rho(G \times K_2)$ can be solved on all inputs with certainty by a constant-depth quantum circuit on $\Theta(|G \times K_2|)$ qubits composed of one- and two-qubit gates;
     \item for any induced subgraph $S$ such that $|S| \geq (1-\epsilon)|G \times K_2|$, no bounded-fanin classical probabilistic circuits whose depth is less than $\alpha \log(|S|)$ can solve $\rho(S)$ on all inputs.
 \end{enumerate}
\end{theorem}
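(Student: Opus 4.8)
The plan is to handle the two items separately; Item (i) is immediate and Item (ii) carries all the work. For Item (i) I would first note that $G\times K_2$ has maximum degree at most $d+1$, where $d$ is the degree bound of the expander $G$: a vertex $(g,b)$ keeps its $\deg_G(g)\le d$ neighbours inside its own copy and gains exactly one neighbour across the $K_2$ edge. Since the degree is bounded by a constant, Lemma~\ref{constantquantum} applies directly and produces a constant-depth quantum circuit on $\Theta(|G\times K_2|)$ qubits solving $\rho(G\times K_2)$ on all inputs with certainty.

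For Item (ii) I would argue by contradiction. Suppose a classical probabilistic circuit $C$ of fan-in at most $K$ and depth $d<\alpha\log|S|$ solves $\rho(S)$ on all inputs, and fix $\alpha=\frac{1}{32\log K}$. The first step is to pass from the corrupted graph to a clean grid minor built entirely from good vertices. Since $S$ inherits the constant degree bound of $G\times K_2$, Claim~\ref{bad} shows that the number of bad vertices of $S$ is $o(|S|)$. I would then call a vertex $g\in V(G)$ of the expander \emph{damaged} if at least one of its two lifts $(g,0),(g,1)$ is absent from $S$ or is bad. At most $\epsilon\,|G\times K_2|=2\epsilon|V|$ vertices lie outside $S$ and at most $o(|V|)$ are bad, so the number of damaged vertices is at most $2\epsilon|V|+o(|V|)$. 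Choosing $\epsilon$ small enough that $2\epsilon$ stays below the threshold for which Lemma~\ref{minorinexpander} holds then guarantees, for all sufficiently large $G$, that the damaged fraction of $V(G)$ is below that threshold. Lemma~\ref{minorinexpander} thus yields a connected component of $G$, on undamaged vertices only, containing a $t\times t$ grid minor with $t=\Omega(|V|^{1/4})$; by construction every vertex of this minor is present in $S$ and good, and so are both of its lifts in $G\times K_2$.

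The heart of the proof is to turn this large grid minor into an instance of the triangle nonlocality of Lemma~\ref{intersect}. Using the branch sets of the minor I would pick three well-separated corner regions as the marked vertices $u,v,w$ and route three internally disjoint paths through the grid so that together they form a single cycle with arcs $L,R,B$. Lifting each path into $G\times K_2$ and toggling between the two copies across $K_2$ edges lets me fix the parity of every arc, so that the cycle has even length and the pairwise distances of $u,v,w$ are all even, exactly as Claim~\ref{nonlocalityintriangle} requires. It remains to verify the locality hypotheses of Lemma~\ref{intersect}. Because $u,v,w$ are good, each of the forward lightcones $L_C(x_u),L_C(x_v),L_C(x_w)$ contains only $O(|V|^{1/16})$ output bits, while each arc comprises $\Omega(|V|^{1/4})$ vertices; grouping the grid into suitably sized boxes (this is Claim~\ref{exception}) and using that the three forward lightcones meet only a vanishing fraction of the boxes, I can position the marked vertices and reroute the arcs so that the output bits of each arc depend on the random string and on at most one geometrically near marked input, never on the opposite one. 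Finally I fix the remaining input bits of $\rho(S)$ so that the active $CZ$ gates realise precisely this even cycle (all other measurements in the $X$ basis). Then the restriction of $C$'s output to the cycle must lie in $\tau(x_ux_vx_w)$ and hence satisfy the relation of Claim~\ref{nonlocalityintriangle} with high probability, whereas Lemma~\ref{intersect} asserts that no circuit with the established locality can do so. This contradiction proves Item (ii).

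I expect the main obstacle to be this last step: establishing the locality hypotheses of Lemma~\ref{intersect} on a minor of an expander rather than on a genuine grid. In a literal 2D grid the three arcs are geometrically separated and the routing is transparent, but here the grid is only ``virtual'', realised through branch sets of uncontrolled shape and size, so the separation of the arcs must be argued purely through the counting of output bits in the small forward lightcones of the good marked vertices. This is exactly what the box construction of Claim~\ref{exception} is designed to do, and tuning the box sizes so that they simultaneously beat the $O(|V|^{1/16})$ lightcone bound and leave room for the parity adjustments coming from the $K_2$ factor is the delicate point. The graph-theoretic input Lemma~\ref{minorinexpander}, although also nontrivial, is cleanly isolated from this argument and deferred to the appendix.
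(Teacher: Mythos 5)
Your proposal is correct and follows essentially the same route as the paper: Item (i) via the degree bound and Lemma~\ref{constantquantum}, and Item (ii) by contradiction through Claim~\ref{bad}, the removal of corrupted/bad vertex pairs followed by Lemma~\ref{minorinexpander} (the paper's Claim~\ref{exception}), the box-based counting argument to separate the marked vertices' lightcones from the cycle, the even-length cycle obtained from the $K_2$ factor, and finally Lemma~\ref{intersect}. The only cosmetic discrepancy is that the box construction is not Claim~\ref{exception} itself but the subsequent claims in the paper's proof; the substance of your argument matches the paper's.
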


\begin{proof}[Proof of Theorem \ref{reTh}]
 We can take $|G| = |V|$ arbitrary large from the property of expander graphs. Then $|G \times K_2| = 2|G|$ and $|S| \geq (1 - \epsilon)|G \times K_2|$ are also sufficiently large. Let us introduce some convenient notations. Given a vertex $u\in G$, we denote $u'$ and $u''$ the two corresponding vertices in $G\times K_2$ (with no special order). Given a vertex $v\in G\times K_2$, we denote $\bar{v}$ the other vertex in $G\times K_2$ associated to the same vertex in $G$.
 
 First, we prove Theorem \ref{reTh} (i). We consider the Graph State Sampling problem $\rho(G \times K_2)$. Since the maximum degree of $G$ is bounded by a constant $d$, the maximum degree of $G \times K_2$ is bounded by $d + 1$. From Lemma~\ref{constantquantum}, $\rho(G \times K_2)$ can be solved with a constant-depth quantum circuit.

 Next, we will prove Theorem \ref{reTh} (ii), the hardness to solve $\rho(S)$ on all inputs with shallow classical circuits. Let $C$ be a classical probabilistic circuit to solve $\rho(S)$ on all inputs and $K$ be the bounded fan-in of $C$. In order to reach a contradiction, we assume the depth of $C$ is less than $\frac{\log|S|}{32 \log{K}}$. Then the number of bad vertices is small, which enables us to prove the following claim. 

 \begin{claim}\label{exception}
  $S$ contains an induced subgraph $G' \times K_2$ such that all vertices are good and $G'$ contains a $\Omega(|V|^{\frac{1}{4}})\times\Omega(|V|^{\frac{1}{4}})$ grid as a minor.
 \end{claim}

 \begin{proof}
  From Claim \ref{bad}, we know that $S$ contains $o(|S|)$ bad vertices. Let us remove all these bad vertices, and write $S_{good}$ the remaining set. We further remove all vertices $u\in S_{good}$ such that $\bar{u}\notin S_{good}$. The remaining set of vertices induces a graph $H\times K_2$, for an induced subgraph $H$ of $G$ such that $|H|\ge (1-2\epsilon-o(1))|G|$. From Lemma \ref{minorinexpander}, when $\epsilon$ is taken small enough, the graph $H\times K_2$ has a connected component $G' \times K_2$, where $G'$ is contains a $\Omega(|V|^{\frac{1}{4}})\times\Omega(|V|^{\frac{1}{4}})$ grid as a minor.
 \end{proof}

 Remember the definition of a graph minor (Definition 2 in Section 2.1). Each connected subgraph $G_u$ in $G'$ forming the grid (except connected subgraphs on the corners of the grid) is adjacent to the four connected subgraphs $G_{v_1}, G_{v_2}, G_{v_3}, G_{v_4}$ where $\{u,v_1\}$, $\{u,v_2\}$, $\{u,v_3\}$ and $\{u,v_4\}$ are edges of the grid. For each $G_u$, we arbitrarily select two of these four components. Assume for instance that we selected $G_{v_1}$ and $G_{v_2}$. We choose arbitrarily one vertex $v$ in $G_u$ adjacent to a vertex of $G_{v_1}$, and one vertex $w$ in $G_u$ adjacent to a vertex of $G_{v_2}$. We then arbitrarily choose one path inside $G_u$ that connects $v$ and $w$ (such a path necessarily exists). We refer to Figure \ref{path} for an illustration. 
 
 \begin{figure}[htbp]
  \centering
  \includegraphics[clip,width=9.0cm]{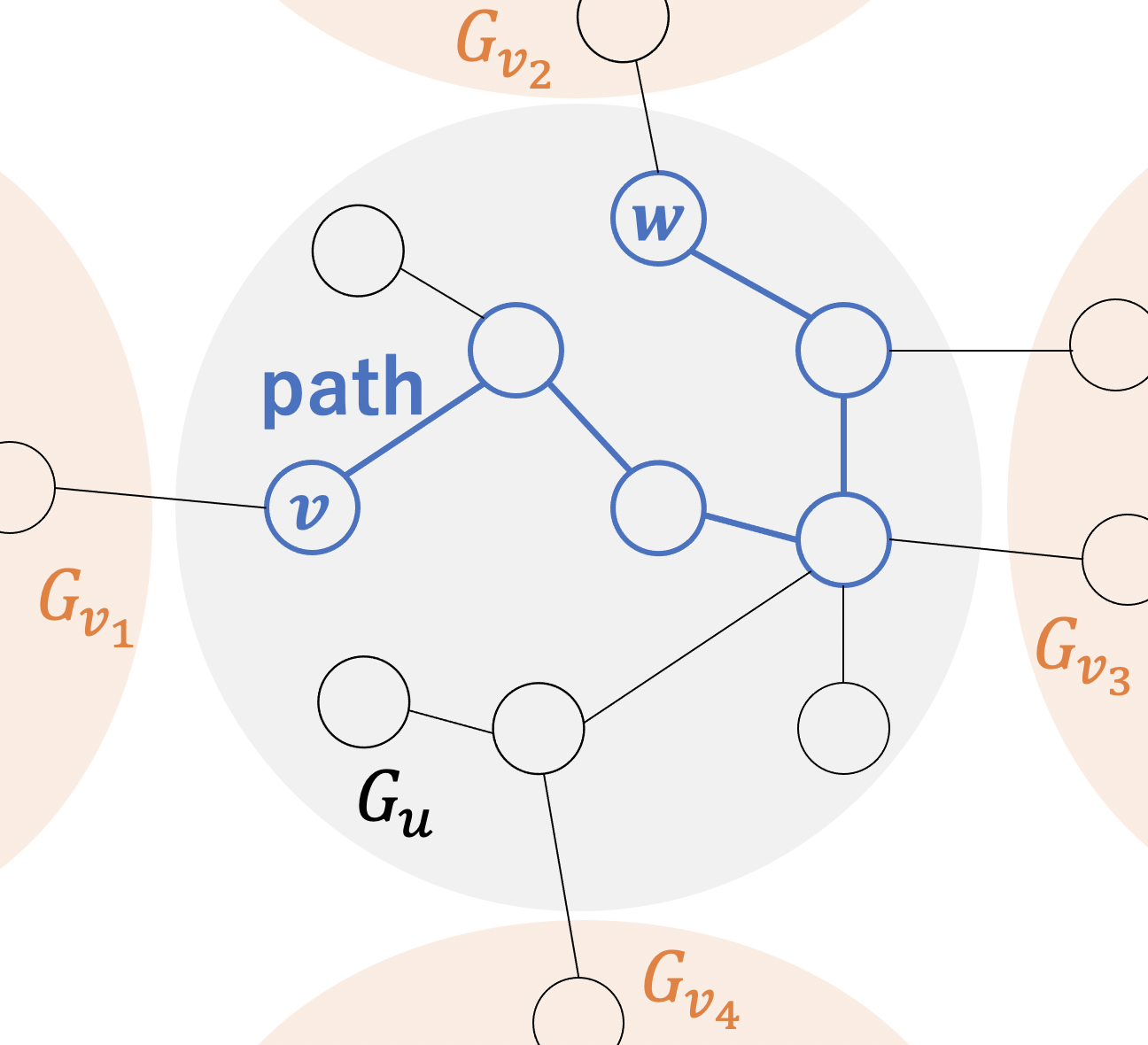}
  \caption{The path in $G_u$ described with blue line connects $G_{v_1}$ and $G_{v_2}$.}
  \label{path}
 \end{figure}

 We denote T the length of one side of the grid (T = $\Omega(|V|^{\frac{1}{4}})$). From each connected subgraph forming the $T \times T$ grid of $G'$, we choose one vertex such that it is on a path selected in the way above (this condition of the vertices is required when adding missing segments inside boxes for Claim \ref{cycle}). For such a vertex $j$, we define $\mathrm{Box}(j) \subseteq G' \times K_2$ as a 2D grid of connected subgraphs of size $\lfloor |V|^{\frac{1}{8}} \rfloor \times \lfloor |V|^{\frac{1}{8}} \rfloor$ centered at the connected subgraph which $j'$ and $j''$ belong to. We choose grid-shaped regions $P,Q,R \subseteq G' \times K_2$ as shown in Figure~\ref{grid}.
 $P$ is a upper-left region of the graph product of $K_2$ and a $\lfloor T/3 \rfloor \times \lfloor T/3 \rfloor$ grid of connected subgraphs cut from $G' \times K_2$. $Q$ is a upper-right region and $R$ is a bottom-left region. The following claim is similar to Claim 6 in \cite{Bravyi_2018}.
 
\begin{figure}[htbp]
 \begin{minipage}{0.49\hsize}
  \begin{center}
   \includegraphics[width=75mm]{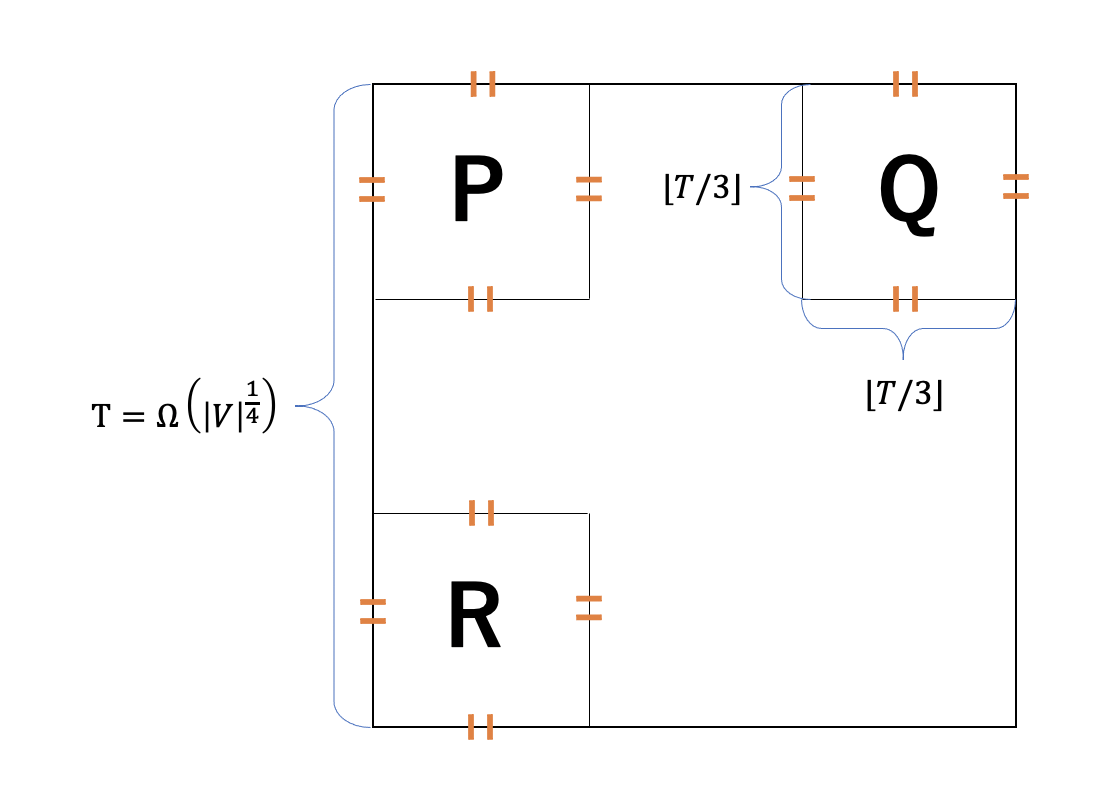}
  \end{center}\vspace{-5mm}
  \caption{Definition of the regions $P,Q,R$ of $G' \times K_2$.}
  \label{grid}
 \end{minipage}
 \hspace{2mm}
 \begin{minipage}{0.49\hsize}
  \begin{center}
   \includegraphics[width=55mm]{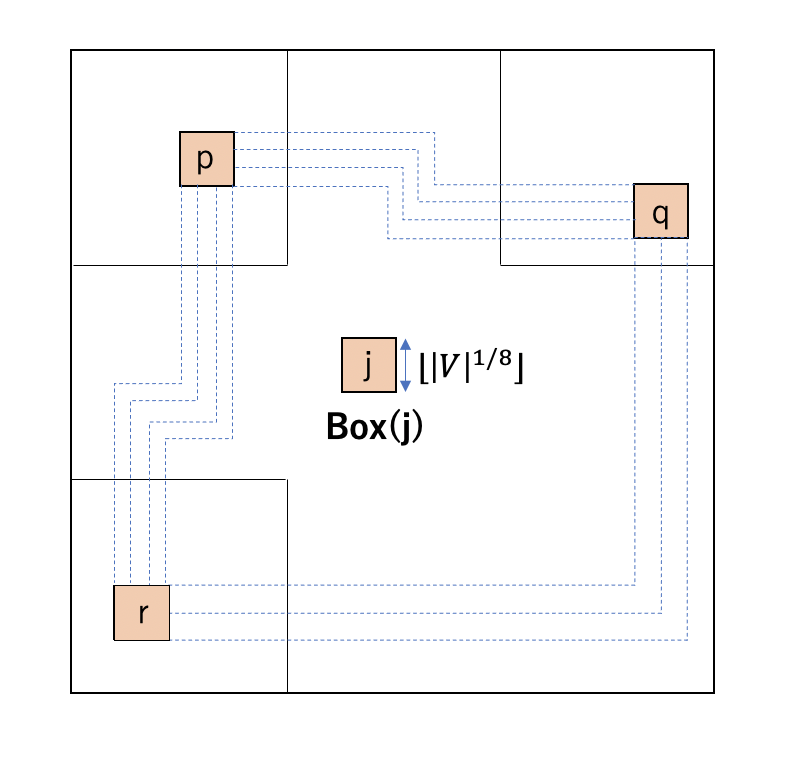}
  \end{center}\vspace{-3mm}
  \caption{Definition of Box($j$) and a possible choice of $p,q,r$.}
  \label{box}
 \end{minipage}
\end{figure}

\newpage
\begin{claim}
  For all large enough $|V|$, we can choose a triple of vertices $p,q,r \in V$ such that $p', p'' \in P$, $q', q'' \in Q$, $r', r'' \in R$ and
  \begin{eqnarray}
    \mathrm{Box}(p) \subseteq P, \ \mathrm{Box}(q) \subseteq Q, \ \mathrm{Box}(r) \subseteq R, \hspace{40pt} \label{boxes} \\
    L_C(x_{p'} \cup x_{p''}) \cap \mathrm{Box}(q) = \emptyset, \ L_C(x_{p'}\cup x_{p''}) \cap \mathrm{Box}(r) = \emptyset, \label{u} \\
    L_C(x_{q'} \cup x_{q''}) \cap \mathrm{Box}(p) = \emptyset, \ L_C(x_{q'} \cup x_{q''}) \cap \mathrm{Box}(r) = \emptyset, \label{v} \\
    L_C(x_{r'} \cup x_{r''}) \cap \mathrm{Box}(p) = \emptyset, \ L_C(x_{r'} \cup x_{r''}) \cap \mathrm{Box}(q) = \emptyset. \label{w}
  \end{eqnarray}
\end{claim}

 \begin{proof}
  One connected subgraph in the grid minor can belong to at most $|V|^{\frac{1}{8}} \times |V|^{\frac{1}{8}} = |V|^{\frac{1}{4}}$ boxes. Since the vertices in $G' \times K_2$ are good, a given lightcone $L_C(x_{u'} \cup x_{u''})$ can intersect with at most $|V|^{\frac{1}{4}} \times |L_C(x_{u'} \cup x_{u''})| = |V|^{\frac{1}{4}} \times O(|S|^{\frac{1}{16}}) = O(|V|^{\frac{5}{16}})$ boxes. The number of possibilites to choose a box in $Q$ is $\Omega(|V|^{\frac{1}{4}}) \times \Omega(|V|^{\frac{1}{4}}) = \Omega(|V|^{\frac{1}{2}})$. Thus if we pick boxes uniformly at random then
  \begin{equation}
    \mathrm{Pr}[L_C(x_{p'} \cup x_{p''}) \cap \mathrm{Box}(q) = \emptyset] \leq O \left(\frac{|V|^{\frac{5}{16}}}{|V|^{\frac{1}{2}}}\right) < \frac{1}{6}
  \end{equation}
  for large enough $|V|$. A similar bound applies to the five others that appear in the three equations (\ref{u}, \ref{v}, \ref{w}). By the union bound, there exists at least one choice of $p,q,r$ that satisfies all the four equations (\ref{boxes}, \ref{u}, \ref{v}, \ref{w}).
 \end{proof}

  Below we consider a cycle $\Gamma$ that is a subgraph of $G' \times K_2$. The following claim is similar to Claim 7 in~\cite{Bravyi_2018}.

 \begin{claim}\label{cycle}
  The following holds for all sufficiently large $|V|$.  Fix some triple of vertices $p,q,r$ satisfying the four equations (\ref{boxes}, \ref{u}, \ref{v}, \ref{w}). Then there exists an even length cycle $\Gamma$ containing $p',p'',q',q'',r',r''$ such that the lightcones $L_C(x_{p'} \cup x_{p''})$, $L_C(x_{q'} \cup x_{q''})$, $L_C(x_{r'} \cup x_{r''})$ contain no vertices of $\Gamma$ lying outside of $\mathrm{Box}(p) \cup \mathrm{Box}(q) \cup \mathrm{Box}(r)$.  
 \end{claim}

 \begin{proof}
  Since the size of connected subgraphs of each box is $\lfloor |V|^\frac{1}{8} \rfloor \times \lfloor |V|^\frac{1}{8} \rfloor$, we can choose $\lfloor |V|^\frac{1}{8} \rfloor$ pairwise vertex disjoint paths $\gamma$ that connect any pair of boxes Box($p$), Box($q$), Box($r$) (in each connected subgraph, we can always find a path which connects adjacent connected subgraphs). We refer to Figure \ref{box} for an illustration. Let $\gamma(a,b)$ be a path connecting $\mathrm{Box}(a)$ and $\mathrm{Box}(b)$, where $a \neq b \in \{p,q,r\}$. Any triple of paths $\gamma(p,q),\gamma(q,r),\gamma(p,r)$ can be completed to a cycle $\Gamma$ by adding the missing segments of the cycle inside the boxes Box($p$), Box($q$), Box($r$) because $p,q,r$ are defined to take such a path inside each box. Since all vertices are good in $G' \times K_2$ and each connected subgraph belongs to at most one path $\gamma$, we infer that $L_C(x_{p'} \cup x_{p''})$ intersects with at most $2 \cdot O(|S|^{\frac{1}{16}}) = O(|V|^{\frac{1}{16}})$ paths $\gamma$. Thus if we pick the path $\gamma(p,q)$ uniformly at random among all $\lfloor |V|^\frac{1}{8} \rfloor$ possible choices then
  \begin{equation}
    \mathrm{Pr}[L_C(x_{p'} \cup x_{p''}) \cap \gamma(p,q) \neq \emptyset] \leq O\left(\frac{|V|^{\frac{1}{16}}}{|V|^{\frac{1}{8}}}\right) < \frac{1}{9}
  \end{equation}
  for enough large $|V|$. The same bound applies to eight remaining combinations of lightcones $L_C(x_{p'} \cup x_{p''}), L_C(x_{q'} \cup x_{q''}), L_C(x_{r'} \cup x_{r''})$ and paths $\gamma(p,q)$, $\gamma(p,r)$, and $\gamma(q,r)$. By the union bound, there exists at least one triple of paths $\gamma(p,q), \gamma(q,r) ,\gamma(p,r)$ that do not intersect with $L_C(x_{p'} \cup x_{p''}), L_C(x_{q'} \cup x_{q''}), L_C(x_{r'} \cup x_{r''})$. When we choose $v'$ and $v''$ consecutively, any cycle in $G' \times K_2$ has an even length. Since we can take the cycle in the way above, the cycle has a even length and is the desired cycle $\Gamma$.
 \end{proof}
 Let $p,q,r$ and $\Gamma$ be chosen as described in Claim \ref{cycle}. Let $M$ be the even number of vertices of $\Gamma$. When we choose $p',p'',q',q'',r',r''$ properly, the distances between $p',q',r'$ are all even. Consider the subset of instances where
 \begin{eqnarray*}
  x_e = 
  \left\{
  \begin{array}{l}
  1 \hspace{10pt} \mathrm{if \ } e \mathrm{ \ is \ an \ edge \ of \ } \Gamma\\
  0 \hspace{10pt} \mathrm{otherwise}
  \end{array}
  \right.
  \mathrm{and}
  \hspace{10pt}
  x_v = 0 \ \mathrm{if} \ (v \in V \setminus \{p',q',r'\})
 \end{eqnarray*}
 There are $2^3 = 8$ such instances corresponding to choices of input bits $x_{p'},x_{q'},x_{r'} \in \{0,1\}$. Let us fix inputs $x$ of the circuit $C$ except $\{x_{p'},x_{q'},x_{r'}\}$ and consider only output bits $z_j$ with $j \in \Gamma$. By the way of fixing, we obtain a classical circuit $D$ which takes a three-bit string $x_{p'} x_{q'} x_{r'} \in \{0,1\}^3$ and a random string $r$ as input and output $z_\Gamma \in \{0,1\}^M$. For any input bit $x_i \in \{x_{p'}, x_{q'}, x_{r'}\}$ we have $L_D(x_i) \subseteq L_C(x_i)$ since any pair of input and output variables which are correlated in $D$ are also correlated in $C$, by definition. Our assumption that $C$ can solve $\rho(S)$ on all inputs implies that $D$ can output $z_\Gamma \in \tau({x_{p'}}{x_{q'}}{x_{r'}})$. From Lemma \ref{intersect}, at least one output bit $z_j$ such that $j \in \Gamma$ and $j \notin \{p', q', r'\}$ depends on the two geometrically near inputs from $x_{p'}, x_{q'}, x_{r'}$. By $L_D(x_i) \subseteq L_C(x_i)$, the same is true for the input-output dependency of $C$. From Claim 9, for each $x_i$, $L_C(x_i)$ only intersects with $z_j$ such that $j$ $\in \mathrm{Box}(i)$ and there is a contradiction. Therefore, the depth of any classical probabilistic circuit that has bounded fan-in $K$ and solves $\rho(S)$ on all inputs is not less than $\frac{\log |S|}{32 \log K}$. This concludes the proof of Theorem 3.
\end{proof}

\bibliography{lipics-v2021-sample-article}

\appendix
\section{Proof of Lemma \ref{minorinexpander}}\label{PL}

First, we introduce a notation. The range in Definition \ref{doer}, 1 $\leq |U| \leq \frac{1}{2}|V|$, may be a little arbitrary. In terms of the range where the expansion ratio is considered, we define more general expander graphs as follows.
\begin{definition}[Definition 2.2 in \cite{krivelevich2019expanders}]\label{doie}
  Let $G = (V,E)$ be a graph, let $I$ be a set of positive integers. The graph G is an $I$-$expander$ if a positive constant $h$ exists such that $N_G(U) \geq h|U|$ for every vertex subset $U \subset V$ satisfying $|U| \in I$.
\end{definition} 
\noindent Note that this definition does not limit the maximum degree of graphs. When the graph $G$ is an expander graph (as defined as Definition \ref{def:exp}), $G$ is also a $\left[1,\frac{|V|}{2}\right]$-expander with bounded degree.

Then, the following claim shows an expander graph still has a large connected component and it can be described using the notation of Definition \ref{doie} when a small fraction of vertices are adversarially removed. 
\begin{claim}
  Let $G=(V,E)$ be an expander graph. If we take a sufficiently small constant $\epsilon > 0$, the graph has a connected component $C$ which has more than $\frac{|V|}{2}$ vertices and is a $\left[\frac{|C|}{3},\frac{2|C|}{3}\right]$-expander after up to an $\epsilon$ fraction of the vertices are adversarially removed.
\end{claim}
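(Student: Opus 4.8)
The plan is to separate the statement into two parts: (a) that removing an $\epsilon$ fraction of vertices leaves a ``giant'' connected component $C$ of size more than $|V|/2$, and (b) that this component is itself a $\left[|C|/3, 2|C|/3\right]$-expander with a constant expansion factor. Throughout, write $R$ for the adversarially removed set (so $|R| \le \epsilon|V|$), let $h$ and $d$ be the expansion constant and maximum degree of $G$, and let $G'$ be the graph induced on the surviving vertices $V \setminus R$. Recall that $G$ is in particular a $\left[1,\frac{|V|}{2}\right]$-expander.

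For part (a), the key observation is that if $W$ is any union of connected components of $G'$, then $W$ has no neighbor among the surviving vertices outside $W$, so every external neighbor of $W$ in the original graph $G$ must lie in $R$; that is, $N_G(W) \subseteq R$. Consequently, for any such $W$ with $1 \le |W| \le |V|/2$, the expansion property of $G$ gives $h|W| \le |N_G(W)| \le |R| \le \epsilon|V|$, hence $|W| \le \epsilon|V|/h$. I would then argue by contradiction: if the largest component had size at most $|V|/2$, then, distinguishing whether some single component has size in $(|V|/4,\,|V|/2]$ or all components are smaller than $|V|/4$ (in which case I accumulate components greedily until the running total first exceeds $|V|/4$), one can assemble a union of components of size strictly between $\epsilon|V|/h$ and $|V|/2$. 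This contradicts the bound above as soon as $\epsilon < h/4$. Hence the largest component $C$ satisfies $|C| > |V|/2$, and it is clearly unique.

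For part (b), fix a balanced set $U \subseteq C$ with $|C|/3 \le |U| \le 2|C|/3$. Since $C$ is a connected component of $G'$, the surviving external neighbors of $U$ all lie in $C$, so $N_G(U)$ splits as the disjoint union of $N_C(U)$ and $N_G(U) \cap R$, giving $|N_C(U)| \ge |N_G(U)| - |R| \ge |N_G(U)| - \epsilon|V|$. When $|U| \le |V|/2$ I can apply the expansion of $G$ directly to $U$, and using $|V| < 2|C| \le 6|U|$ this yields $|N_C(U)| \ge (h - 6\epsilon)|U|$, a constant-factor expansion once $\epsilon$ is small.

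The main obstacle is the remaining regime $|U| > |V|/2$, where the expansion hypothesis on $G$ no longer applies to $U$ itself. Here I would instead apply expansion to the complement $\bar U = C \setminus U$, which is balanced and has size at most $|V|/2$, obtaining a large boundary $N_C(\bar U) \subseteq U$. Transferring this to the $\bar U$-side requires the bounded-degree assumption: counting the edges crossing between $U$ and $\bar U$, each vertex of $N_C(\bar U)$ contributes at least one crossing edge while each vertex of $N_C(U)$ is incident to at most $d$ of them, so $|N_C(U)| \ge |N_C(\bar U)|/d$, again a constant fraction of $|U|$. Combining the two regimes shows $C$ is a $\left[|C|/3, 2|C|/3\right]$-expander with constant $\frac{h-6\epsilon}{2d}$, valid for all sufficiently small $\epsilon$ (e.g.\ $\epsilon < h/12$). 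I expect this complement-plus-degree-counting step to be the technically delicate point, since it is where the symmetry between the two sides of the vertex boundary must be restored.
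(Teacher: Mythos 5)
Your proposal is correct, and it differs from the paper's proof in both halves in ways worth noting. For the giant-component part, the paper sums the expansion guarantee over \emph{all} components $C_i$ of the surviving graph and divides by the maximum degree $d$ (since a removed vertex can neighbor at most $d$ components), reaching the contradiction $\epsilon \geq \frac{h/d}{1+h/d}$; you instead observe that any union $W$ of components satisfies $N_G(W)\subseteq R$, hence $|W|\le \epsilon|V|/h$ whenever $|W|\le|V|/2$, and then assemble a union of size in $\left(\frac{|V|}{4},\frac{|V|}{2}\right]$ by a greedy accumulation. Your route avoids the degree bound entirely in this step and gives the cleaner threshold $\epsilon<h/4$; the only blemish is the case split, which as written omits a component of size exactly $\frac{|V|}{4}$ — make the second case ``all components of size at most $\frac{|V|}{4}$'' and the accumulation argument covers it. For the expansion of $C$, your treatment is actually \emph{more} complete than the paper's: the paper applies the expansion property of $G$ directly to an arbitrary $W\subseteq C$ with $\frac{|C|}{3}\le|W|\le\frac{2|C|}{3}$, but such a $W$ can have size up to $\frac{2|V|}{3}>\frac{|V|}{2}$, where Definition~\ref{doer} gives no guarantee, so the paper silently skips the regime you call the ``main obstacle.'' Your fix — passing to the complement $\bar U=C\setminus U$ (which does have size below $\frac{|V|}{2}$ when $|U|>\frac{|V|}{2}$) and transferring its boundary back via the edge-counting inequality $|N_C(U)|\ge |N_C(\bar U)|/d$ — closes this gap correctly, at the cost of a factor $d$ in the expansion constant, which is harmless since only positivity of the constant is needed downstream.
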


\begin{proof}
Let $h$ be the expansion ratio and $d$ be the maximum degree of the graph $G$.
  Let $V' \subset V$ be an arbitrary subset of vertices such that $|V'| \leq \epsilon|V|$. Let $C_1,...,C_m$ be the connected components of the left graph after $|V'|$ vertices are adversarially removed. 
  
  To reach a contradiction, we assume every connected component is equal to or smaller than half of $|V|$, i.e., for all $i, \ |C_i| \leq \frac{|V|}{2}$. From $|V \setminus V'| + |V'| = |V|$, $|V \setminus V'| = |V| - |V'| \geq (1-\epsilon)|V|$. Each connected component $C_i$ has its neighbor $N_G(C_i)$ such that $|N_G(C_i)| \geq h|C_i|$ since $|C_i| \leq \frac{|V|}{2}$. A vertex can be a neighbor of at most $d$ connected components at the same time. Therefore, by the summation of neighbors of all connected components $C_i$, 
  \[
   |N_G(V \setminus V')| \geq \frac{h|V \setminus V'|}{d} \geq \frac{h}{d} (1-\epsilon) |V|. 
  \]
  In terms of the total number of vertices, $|V \setminus V'| + |N_G(V \setminus V')| \leq |V|$. Then,
  \[
    \epsilon|V| \geq |V'| = |V| - |V \setminus V'| \geq |N_G(V \setminus V')| \geq \frac{h}{d}(1-\epsilon)|V|.
  \]
  Thus, $\epsilon \geq \frac{\frac{h}{d}}{1+\frac{h}{d}}$ and this contradicts $\epsilon$ is sufficiently small.  Therefore we can pick a connected component $C$ such that $|C| > \frac{|V|}{2}$ from $C_1,...,C_m$.
  
  In the graph $C$, we consider an arbitrary subset of vertices $W$, which satisfies $\frac{|C|}{3} \leq W \leq \frac{2|C|}{3}$. From the expander property of $G$, $|N_G(W)| \geq \frac{h|C|}{3} > \frac{h|V|}{6}$. Since the number of removed vertices is up to $\epsilon|V|$, $|N_C(W)| > \left(\frac{h}{6}-\epsilon\right)|V|$. If we take $\epsilon$ smaller than $\frac{h}{6}$, $C$ is a $\left[\frac{|C|}{3},\frac{2|C|}{3}\right]$-expander.
\end{proof}

The next claim is to show there is a relation between $I$-expanders of Definition 5.

\begin{claim}[Lemma 2.4 in \cite{krivelevich2019expanders}]
  Let graph $G = (V,E)$ be a $\left[\frac{|V|}{3},\frac{2|V|}{3}\right]$-expander. Then there is a vertex subset $Z \subset V$ such that $|Z| < \frac{|V|}{3}$ and the graph $G' = G[V \setminus Z]$ is a $\left[1,\frac{|G'|}{2}\right]$-expander. 
\end{claim}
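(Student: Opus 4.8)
The plan is to build $Z$ by a \emph{peeling} argument: starting from $G$, I would repeatedly strip off small subsets that fail to expand, and then show that this process must halt after deleting strictly fewer than $|V|/3$ vertices, at which point the surviving graph expands on \emph{every} subset up to half its size. Throughout, write $n=|V|$ and let $h>0$ be the expansion constant witnessing that $G$ is a $\left[\frac{n}{3},\frac{2n}{3}\right]$-expander in the sense of Definition \ref{doie}.

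Concretely, set $W_0=\emptyset$ and $G_0=G$. Given $W_i$ and $G_i=G[V\setminus W_i]$, if $G_i$ is already a $\left[1,\frac{|G_i|}{2}\right]$-expander with constant $h$ I stop and put $Z=W_i$; otherwise there is a \emph{bad} set $U_i\subseteq V\setminus W_i$ with $1\le |U_i|\le \frac{|G_i|}{2}$ and $|N_{G_i}(U_i)|<h|U_i|$, and I set $W_{i+1}=W_i\cup U_i$. The sets $U_i$ are pairwise disjoint, so the process removes at least one new vertex per step and halts on the finite graph $G$; and by construction the terminal graph $G'=G[V\setminus Z]$ has no bad set, i.e. $|N_{G'}(U)|\ge h|U|$ for all $U$ with $1\le|U|\le\frac{|G'|}{2}$, which is exactly the desired $\left[1,\frac{|G'|}{2}\right]$-expander condition. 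Everything then reduces to controlling $|Z|$.

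The key observation I would isolate is that the external neighborhood of the \emph{cumulative} removed set is covered by the neighborhoods of the individual bad sets. For any prefix $W_k=\bigcup_{i<k}U_i$, take $v\in N_G(W_k)$: then $v\notin W_k$ and $v$ has a neighbor $u$ in some $U_i$ with $i<k$; since $v\notin W_i$ and $u\in U_i\subseteq V\setminus W_i$, the edge $\{u,v\}$ survives in $G_i$, so $v\in N_{G_i}(U_i)$. Hence $N_G(W_k)\subseteq\bigcup_{i<k}N_{G_i}(U_i)$, giving $|N_G(W_k)|\le\sum_{i<k}|N_{G_i}(U_i)|<h\sum_{i<k}|U_i|=h|W_k|$.

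Finally I would argue by contradiction that the running total never reaches $n/3$. Suppose it did, and let $k$ be the first index with $|W_k|\ge n/3$. Using $|U_{k-1}|\le\frac{|G_{k-1}|}{2}=\frac{n-|W_{k-1}|}{2}$ together with $|W_{k-1}|<n/3$ yields $|W_k|\le\frac{n+|W_{k-1}|}{2}<\frac{2n}{3}$, so $|W_k|$ lands in the window $\left[\frac{n}{3},\frac{2n}{3}\right]$ where the hypothesis forces $|N_G(W_k)|\ge h|W_k|$, contradicting the bound $|N_G(W_k)|<h|W_k|$ just established. Therefore the process terminates with $|Z|<n/3$, and $G'$ is the required $\left[1,\frac{|G'|}{2}\right]$-expander. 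The delicate point, and the step I would watch most carefully, is precisely this size control: the constraint $|U_i|\le\frac{|G_i|}{2}$ baked into the notion of a bad set is exactly what prevents the cumulative total from overshooting past $2n/3$ in a single step, so that the middle-range expansion of $G$ can legitimately be invoked.
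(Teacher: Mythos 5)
Your proof is correct. The paper itself gives no proof of this claim---it is invoked by citation to Lemma 2.4 of \cite{krivelevich2019expanders}---and your peeling argument, with the covering bound $N_G(W_k)\subseteq\bigcup_{i<k}N_{G_i}(U_i)$ giving $|N_G(W_k)|<h|W_k|$ and the first-overshoot estimate $|W_k|<\frac{2|V|}{3}$ placing $|W_k|$ inside the window $\left[\frac{|V|}{3},\frac{2|V|}{3}\right]$ where the hypothesis applies, is exactly the standard argument by which that cited lemma is proved, so it validly fills in the omitted details.
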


The most significant result of minors of expander graphs is Claim \ref{claim} below. It is known that this bound ($O(\frac{|V|}{\log(|V|)})$) is tight especially in terms of the size of grid minors.

\begin{claim}[Corollary 8.3 in \cite{krivelevich2019expanders} and Theorem 1.1 in \cite{chuzhoy2019large}]\label{claim}
  Let graph $G = (V,E)$ be a $\left[1,\frac{|V|}{2}\right]$-expander. For any graph $H$ with $O(\frac{|V|}{\log(|V|)})$ vertices and edges, G contains H as a minor.
\end{claim}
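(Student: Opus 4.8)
The plan is to reduce the statement to two facts: that a $\left[1,\frac{|V|}{2}\right]$-expander has treewidth $\Omega(|V|)$, and that large treewidth forces containment of all sufficiently sparse minors. For the first, I would use the balanced-separator characterization of treewidth, namely that a graph of treewidth $k$ admits a balanced vertex separator of size at most $k+1$. The expansion hypothesis rules out small balanced separators: if a separator $S$ splits $G$ so that some side $A$ satisfies $|A|\le \frac{|V|}{2}$ and is a constant fraction of $|V|$, then every neighbor of $A$ outside $A$ lies in $S$, so $N_G(A)\subseteq S$ and hence $|S|\ge h|A|=\Omega(|V|)$. Since no small balanced separator exists, we conclude $\mathrm{tw}(G)=\Omega(|V|)$.

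For the second fact, I would invoke the sparse-minor embedding result that any graph of treewidth $k$ contains as a minor every graph $H$ with $O(k/\log k)$ vertices and edges. Combining this with $k=\Omega(|V|)$ yields that $G$ contains every $H$ with $O(|V|/\log|V|)$ vertices and edges as a minor, which is exactly the claim. The two ingredients correspond precisely to the two citations: the treewidth-to-minor machinery of \cite{krivelevich2019expanders} supplies the embedding, while \cite{chuzhoy2019large} certifies that the $1/\log|V|$ factor is optimal, so the threshold $|V|/\log|V|$ is sharp and cannot be improved to a linear bound.

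The substantive obstacle is the embedding step itself, and in particular the origin of the $\log|V|$ factor. The concrete realization is to carve out $\Theta(\log|V|)$-sized connected branch sets by breadth-first search — expansion guarantees that each such ball grows and that deleting a small fraction of vertices preserves the expansion of the remainder — and then to connect the branch sets indexed by the edges of $H$ using vertex-disjoint paths. The hard part is routing these paths without congestion: in a bounded-degree expander one cannot force arbitrarily many vertex-disjoint paths through a single local region, and this congestion is exactly what caps the number of embeddable edges at $O(|V|/\log|V|)$ rather than $\Theta(|V|)$. Controlling it — by extracting a well-linked set from the large treewidth and charging each routing path $\Theta(\log|V|)$ vertices — is where the real work lies; the accounting $O(|V|/\log|V|)\cdot\Theta(\log|V|)=O(|V|)$ shows that the total vertex budget just fits, which is precisely why the bound is tight.
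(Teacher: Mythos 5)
Your Step 1 (a $\left[1,\frac{|V|}{2}\right]$-expander has treewidth $\Omega(|V|)$ via balanced separators) is correct, but your proof then hinges entirely on Step 2, and that is where there is a genuine gap. The statement you invoke as a known black box --- every graph of treewidth $k$ contains every graph with $O(k/\log k)$ vertices and edges as a minor --- is not what either cited work proves, and it is not available in the literature at this tightness. It would in fact be strictly \emph{stronger} than Theorem 1.1 of \cite{chuzhoy2019large}: bounded-degree expanders have treewidth $\Theta(n)$, so a tight treewidth-based embedding theorem would immediately imply the (then new) tight expander result; this is precisely why \cite{chuzhoy2019large} state and prove their theorem for expanders rather than for treewidth. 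The known routes from large treewidth to arbitrary (non-planar) sparse minors go through well-linked sets and cut-matching-style routing and lose polylogarithmic factors, while grid-minor theorems are useless here because every minor of a grid is planar. Your attributions are also reversed: Corollary 8.3 of \cite{krivelevich2019expanders} is not ``treewidth-to-minor machinery'' but an embedding theorem proved directly from vertex expansion, and Theorem 1.1 of \cite{chuzhoy2019large} is the positive embedding result for bounded-degree expanders, not merely a certificate that the $1/\log|V|$ factor is optimal. So your reduction replaces the claim by a different, stronger, unproven statement.

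Note also that the paper itself does not prove this claim: it imports it verbatim from the two citations, whose hypothesis is exactly the hypothesis of the claim --- expansion, not treewidth. That distinction matters because expansion is what drives the actual proofs in those sources: one repeatedly deletes the vertices already used, re-extracts a sub-expander of what remains (this is what the sub-expander extraction claims of Appendix \ref{PL} formalize), and uses the resulting $O(\log|V|)$ diameter to connect branch sets by short paths, giving the budget $O\left(\frac{|V|}{\log|V|}\right)\cdot O(\log|V|)=O(|V|)$. Your final paragraph sketches essentially this argument, so you do have the right skeleton; but it runs on expansion directly, and the treewidth detour of Steps 1--2 is both unnecessary and unsupported. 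The fix is to drop treewidth altogether and either carry out the expansion-based embedding or simply cite the result as the paper does.
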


Finally, using the above claims, we can prove Lemma \ref{minorinexpander}.

\begin{proof}[Proof of Lemma \ref{minorinexpander}]
  From Claim 20, after the removal, the left graph has a connected component $C$ which is a $\left[\frac{|C|}{3},\frac{2|C|}{3}\right]$-expander. Using Claim 21, $C$ contains an induced subgraph $C'$ such that $C'$ is a $\left[1,\frac{|C'|}{2}\right]$-expander and $|C'| > \frac{2|C|}{3} > \frac{|V|}{3}$. When $n$ is sufficiently large, $\frac{n}{\log(n)} \gg n^{\frac{1}{2}}$. Therefore, from Claim 22, $C$ contains a $\Omega(|V|^{\frac{1}{4}})\times \Omega(|V|^{\frac{1}{4}})$ grid as a minor since the maximum degree of grid graphs is 4, which is a constant, and the number of vertices and edges of a $\Omega(|V|^{\frac{1}{4}})\times \Omega(|V|^{\frac{1}{4}})$ grid is $\Omega(|V|^\frac{1}{2})$.
\end{proof}

\section{Proof of Lemma \ref{intersect}}\label{b}
We note a mathematical claim below.
\begin{claim}[\cite{barrett2007modeling,Bravyi_2018,le2019average}]\label{affine} 
  Consider any affine function $q:\{0,1\}^3 \rightarrow \{0,1\}$ and any three affine function $q_1:\{0,1\}^2\rightarrow\{0,1\},q_2:\{0,1\}^2\rightarrow \{0,1\},q_3:\{0,1\}^2\rightarrow\{0,1\}$ such that
  \begin{equation}\nonumber
    q_1(b_2,b_3) \oplus q_2(b_1,b_3) \oplus q_3(b_1,b_2) = 0 \label{1}
  \end{equation}
  holds for any $(b_1,b_2,b_3) \in \{0,1\}^{3}$. Then at least one of the four following equalities does not hold:
  \begin{eqnarray*}\nonumber
    &\mspace{-75mu}q(0,0,0)=0, \\
    &q(0,1,1) \oplus q_1(1,1) = 1, \\
    &q(1,0,1) \oplus q_2(1,1) = 1, \\
    &q(1,1,0) \oplus q_3(1,1) = 1.
  \end{eqnarray*}
\end{claim}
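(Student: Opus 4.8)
The plan is to argue by contradiction through an explicit coefficient computation over $\mathbb{F}_2$. First I would write every affine function in canonical form: set
\[
  q(b_1,b_2,b_3) = \alpha_0 \oplus \alpha_1 b_1 \oplus \alpha_2 b_2 \oplus \alpha_3 b_3,
\]
and similarly $q_1(b_2,b_3)=\beta_0\oplus\beta_2 b_2\oplus\beta_3 b_3$, $q_2(b_1,b_3)=\gamma_0\oplus\gamma_1 b_1\oplus\gamma_3 b_3$, and $q_3(b_1,b_2)=\delta_0\oplus\delta_1 b_1\oplus\delta_2 b_2$, with all coefficients in $\{0,1\}$.

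Next I would translate the hypothesis into conditions on these coefficients. Since $q_1(b_2,b_3)\oplus q_2(b_1,b_3)\oplus q_3(b_1,b_2)=0$ is required to hold identically in $(b_1,b_2,b_3)$, matching the constant term and the coefficients of $b_1$, $b_2$, $b_3$ separately yields
\[
  \beta_0\oplus\gamma_0\oplus\delta_0 = 0,\quad \gamma_1=\delta_1,\quad \beta_2=\delta_2,\quad \beta_3=\gamma_3.
\]
Equivalently, each of the four groupings $\beta_0\oplus\gamma_0\oplus\delta_0$, $\gamma_1\oplus\delta_1$, $\beta_2\oplus\delta_2$, $\beta_3\oplus\gamma_3$ vanishes.

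For the contradiction I would assume all four displayed equalities hold simultaneously. The first, $q(0,0,0)=0$, gives $\alpha_0=0$, which I substitute into the remaining three. I would then take the XOR of the three equations of the form ``$\cdots=1$'': on the right-hand side this produces $1\oplus1\oplus1=1$, while on the left-hand side the coefficients of $q$ cancel in pairs (each of $\alpha_1,\alpha_2,\alpha_3$ occurs in exactly two of the three equations), leaving precisely
\[
  \beta_0\oplus\beta_2\oplus\beta_3 \oplus \gamma_0\oplus\gamma_1\oplus\gamma_3 \oplus \delta_0\oplus\delta_1\oplus\delta_2.
\]
Regrouping this sum into the four blocks identified above shows it equals $0$, contradicting the value $1$ forced by the right-hand side.

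The computation is entirely routine; the only real content, and the step I would emphasize, is the observation that XORing the three ``$=1$'' constraints makes the contribution of $q$ disappear, so that the parity conditions on $q_1,q_2,q_3$ alone suffice to force the contradiction. This is exactly the algebraic shadow of the fact that a local (affine) classical assignment cannot reproduce the GHZ-type correlations of the triangle graph state.
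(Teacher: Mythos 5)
Your proof is correct: the canonical-form parametrization over $\mathbb{F}_2$, the coefficient-matching that extracts the four vanishing blocks from the identity $q_1\oplus q_2\oplus q_3=0$, and the cancellation of the $q$-coefficients after XORing the three ``$=1$'' constraints all check out. The paper itself states this claim without proof (deferring to the cited references), and your argument is precisely the standard GHZ-type parity contradiction used there, so your write-up simply supplies the omitted details along the same canonical route.
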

\begin{proof}[Proof of Lemma \ref{intersect}]
Let us write $z = F(x,r)$ for the function which is computed by the circuit. Below we show for each $r$ there exists a $x$ such that $F(x,r) \notin \tau(x)$. Let $r$ be fixed and we consider $z = F(x,r)$ as a function of $x$. Suppose first that $z_R \oplus z_B \oplus z_L=1$ for some $x_0 \in \{0,1\}^3$. Then by Claim \ref{nonlocalityintriangle}, $F(x_0,r) \notin \tau(x_0)$ and we are done. Next suppose that $z_R \oplus z_B \oplus z_L=0$ for all $x \in \{0,1\}^3$. From the assumption, $z_E$ is an affine function of $x_u$, $x_v$ and $x_w$. In the same way, $z_L$ is an affine function $x_u$ and $x_v$. $z_R$ is an affine function of $x_v$, $x_w$ and $z_B$ is one of $x_u$, $x_w$. Therefore, we can correspond $z_E$, $z_L$, $z_R$ and $z_B$ to $q, q_1, q_2$ and $q_3$ in Claim \ref{affine} and, from Claim \ref{affine}, the output $z$ cannot meet all the four equation (2)-(5) simultaneously. Therefore, the classical circuit cannot output $z \in \tau(x)$ with high probability.
\end{proof}
\end{document}